\newcommand{\prob}{{\bf \mbox{\bf Pr}}}
\begin{document}

\title{Triadic Consensus}
\subtitle{A Randomized Algorithm for Voting in a Crowd}
\author{Ashish Goel\inst{1} \and David Lee\inst{1}}
\institute{Stanford University \email\{ashishg, davidtlee\}@stanford.edu}

\maketitle

\begin{abstract}
    Typical voting rules do not work well in settings with many candidates. If there are just several hundred candidates, then even a simple task such as choosing a top candidate becomes impractical. Motivated by the hope of developing group consensus mechanisms over the internet, where the numbers of candidates could easily number in the thousands, we study an urn-based voting rule where each participant acts as a voter and a candidate. We prove that when participants lie in a one-dimensional space, this voting protocol finds a $(1-\epsilon/\sqrt{n})$ approximation of the Condorcet winner with high probability while only requiring an expected $O(\frac{1}{\epsilon^2}\log^2 \frac{n}{\epsilon^2})$ comparisons on average per voter. Moreover, this voting protocol is shown to have a quasi-truthful Nash equilibrium: namely, a Nash equilibrium exists which may not be truthful, but produces a winner with the same probability distribution as that of the truthful strategy. 
\end{abstract}

\section{Introduction}

Voting is often used as a method for achieving consensus among a group of individuals. This may happen, for example, when a committee chooses a representative or friends go out to watch a movie. When the group is small, this process is relatively easy; however, for larger groups, the typical requirement of ranking all candidates becomes impractical and heuristics are often applied to narrow down opinions to a few representative ones before a vote is taken.

This problem of large-scale preference aggregation is even more interesting in light of the rising potential of crowdsourcing. Suppose that a city government wanted to ask its constituencies to contribute solutions for an ``ideal budget that cuts 50 percent of the deficit''.\footnote{See, for example, widescope.stanford.edu, aimed at tackling the federal budget deficit} Soliciting such proposals may be relatively straightforward; however, it is not clear how these proposals should be aggregated. In particular, a participant cannot even look through each proposal, making seemingly simple tasks such as choosing top ranked proposals, difficult. A solution to this problem would enable a new level of collaboration, a key step towards unleashing the full potential of crowdsourcing.

In this paper, we propose a randomized voting rule designed for scenarios like the above. In our problem setting, each participant submits exactly one proposal, representing his or her stance on the question of interest. A random triad of participants is then selected and each selected member is made to vote between the other two. Roughly speaking (details are elaborated in Sect. \ref{subsec:TriadicConsensus}), if there is a three-way tie, the participants are thrown out from the election; otherwise, the losers are replaced by `copies' of the winner. This is then repeated until there is a single participant remaining, who is declared the winner.

We show that for single peaked preferences, Triadic Consensus converges approximately to the Condorcet winner\footnote{The candidate who would beat any other candidate in a pairwise majority election. In single dimensional spaces, this happens to be the median participant.} with high probability, while only requiring an average of $\sim\log^2 n$ (conjectured to be $\sim\log n$) comparisons per individual. As an illustration, recall our motivating scenario of a city government crowdsourcing a question to its constituencies. Suppose that the city finds ten thousand participants and that the participant preferences are single-peaked. Clearly, looking through all 10000 proposals to perform the simple task of voting for a favorite is impossible. With Triadic Consensus, each participant would make an average of 177 (conjectured to be 13.3) pairwise comparisons for the algorithm to produce a winner. This winner would be between proposals 4950 and 5050 with 95 percent probability and between proposals 4900 and 5100 with 99.99 percent probability. In other words, the winner would be in the top 1 percent of submitted proposals with 95 percent probability and in the top 2 percent of submitted proposals with 99.99 percent probability.

In addition, we show that Triadic Consensus has nice properties for protecting against manipulation. Suppose that the rankings of candidates are induced from an underlying distance metric and suppose that each candidate has a concave utility in that distance. Then Triadic Consensus has a {\it quasi-truthful} Nash equilibrium. Specifically, (see Sect. \ref{sec:quasitruthful}) a Nash equilibrium exists which may not be truthful, but still chooses a winner {\it with the same probability distribution} as if every participant voted truthfully. Surprisingly, we achieve this result by counterintuitively allowing voters to express cyclical preferences (e.g. $a > b$, $b > c$, and $c > a$). 

\subsection{Related Work and Our Contributions}

Given the long history of work on voting theory, it is not surprising that the problems we tackle have been, for the most part, thought about before. Here, we give a brief overview of related work, followed by a summary of our contributions. For in-depth reading, we refer the reader to Brandt et al. \cite{Brandt2012}.

\subsubsection{Voting rule criteria}

One of the earliest criteria introduced for evaluating voting rules is known as the Condorcet criteria, introduced by Marquis de Condorcet\footnote{See Young \cite{Young1988} for a fascinating historical description of the early work of Condorcet.}. It states that if a candidate exists who would win against every other candidate in a majority election, then this candidate should be elected. Unfortunately, such a candidate does not always exist. Since then, many other criteria have been introduced as ways to evaluate voting rules. However, in the surprising result known as Arrow's Impossibility Theorem, Arrow \cite{Arrow1950} proved that there were three desirable criterion that no deterministic voting rule could satisfy. This was expanded by Pattanaik and Peleg \cite{Pattanaik1986} to show that a similar result holds for probabilistic voting rules. 

\subsubsection{Strategic manipulation} 

This sparked a wave of impossibility results, including the classical Gibbard-Satterthwaite Impossibility Theorem. Define a voting rule to be {\it strategy-proof} if it is always in a voter's interest to submit his true preference, regardless of the other voter rankings. Gibbard \cite{Gibbard1973} and Satterthwaite \cite{Satterthwaite1975} independently showed that all deterministic, strategy-proof voting rules must either be dictatorships or never allow certain candidates to win. This was extended to show that only very simple probabilistic voting rules were strategy-proof\cite{Gibbard1977}.

Numerous attempts at circumventing these impossibility result have been made. Bartholdi et al. \cite{Bartholdi1989} first proposed using computational hardness as a barrier against manipulation in elections. However, despite many NP-hardness results on manipulation of voting rules\cite{Faliszewski2010}, it was shown that there do not exist any voting rules that are {\it usually} hard to manipulate\cite{Conitzer2006}. 

Procaccia \cite{Procaccia2010} used the simple probabilistic voting rules of Gibbard \cite{Gibbard1977} to approximate common voting rules in a strategy-proof way, but the approximations are weak and they show that, for many of these voting rules, no strategy-proof approximations can be much stronger. Birrell and Pass \cite{Birrell2011} extended this idea to approximately strategy-proof voting, proving that there exist tight approximations of any voting rule that are close to strategy-proof. Recently, Alon et al. \cite{Alon2011} studied the special case of approval voting when voters and candidates coincide. They show that even though no deterministic strategy-proof mechanism has a finite approximation ratio, a randomized strategy-proof mechanism exists which has a good approximation ratio.

\subsubsection{Communication complexity}

When the number of candidates is large, it is important to study voting rules from the perspective of the burden on voters. Conitzer and Sandholm \cite{Conitzer2005} studied the worst case number of bits that voters need to communicate (e.g. pairwise comparisons) in order to determine the ranking or winner of common voting rules; for many of these voting rules, it was shown that the number of bits required is essentially the same as what is required for reporting the entire ranking. In addition, they showed \cite{Conitzer2002} that for many common voting rules, determining how to elicit preferences efficiently is NP-complete, even when perfect knowledge about voter preferences is assumed. Lu and Boutilier \cite{Lu2011} proposed the idea of reducing communication complexity under approximate winner determination. Though they do not present theoretical guarantees, they propose a regret minimizing algorithm and show significant reductions in communication when run on experimental data sets.

\subsubsection{Single-peaked preferences}

One special case that avoids the many discouraging results above is that of single-peaked preferences\cite{Black1948} (or other domain restrictions). Single-peaked preferences are those for which candidates can be described as lying on a line. Every voter's utility function is peaked at one candidate and drops off on either side. For such preferences, a Condorcet winner always exists and is the candidate who is the median of all voter peaks. This winner can be found by the classical median voting rule, which has each voter state their peak and returns the median of these peaks. It turns out that the median voting rule is both strategy-proof\cite{Moulin1980} and has a low communication complexity of $O(n\log m)$\cite{Escoffier2008}, where $n$ is the number of voters and $m$ is the number of candidates. Conitzer \cite{Conitzer2009} also studies the problem of eliciting voter preferences or the aggregate ranking using comparison queries.

The median voting rule has one weakness: it requires knowledge of an axis, which can make it impractical in practice. First, the {\it algorithm} requires knowledge of the axis in order to pick the median of peaks. When an axis isn't known, Escoffier et al. \cite{Escoffier2008} provides an $O(mn)$ algorithm for finding such an axis with additional queries, but with no strategic guarantees. Second, the {\it voter} also requires knowledge of the axis. In situations where proposals have multiple criterion, but are still single peaked (for example, in a linear combination of the criterion), it may not be obvious to the voter where the axis is. Third, and more subtle, even if an axis is known, it may not be practical to {\it express} a voter's position on this axis. Take, for example, the canonical liberal-conservative axis used to support the single-peaked setting. It is obvious that one extreme of the axis is an absolute liberal and that the other is an absolute conservative. But how would a voter express any position in between? It would not make sense for a voter to express his or her peak as ``seventy percent liberal''.\footnote{Note that he cannot just state his favorite candidate as his peak because this would require looking through all $n$ candidates.}

\subsubsection{Our contributions}\label{sec:contrib}

Triadic Consensus solves the previous problems by eliminating the need for an axis. The only task voters are required to perform is a series of comparisons between two candidates. Likewise, the central algorithm does not require any knowledge about proposal positions. With these properties, we prove the following guarantees (as made precise in Sects. \ref{sec:approx_comm} and \ref{sec:nash}):
\begin{enumerate}
    \item For single-peaked preferences, Triadic Consensus finds a $(1-\epsilon/\sqrt{n})$ approximation of the Condorcet winner with high probability with a communication complexity of $O(\frac{n}{\epsilon^2}\log^2 \frac{n}{\epsilon^2})$, i.e. $\sim n\log^2 n$ (conjectured to be $\sim n\log n$) for a $1 - \frac{1}{\sqrt{n}}$ approximation and $\sim \frac{1}{\epsilon^2}\log^2\frac{1}{\epsilon^2}$ for a $1-\epsilon$ approximation.
    \item For a single-dimensional setting, Triadic Consensus has a quasi-truthful Nash equilibrium when participants have concave utility functions.
\end{enumerate}
These results are especially interesting given that they are coupled with the following novel concepts:
\begin{enumerate}
    \item {\it A localized consensus mechanism for large groups}. We propose Triadic Consensus as an approach for large groups to make decisions using small decentralized decisions among groups of three.
    \item {\it Quasi-truthful voting rules and cyclical preferences}. When each participant is a voter and a candidate, we demonstrate that allowing participants to express cyclical preferences ($a > b$, $b > c$, and $c > a$) can introduce strategies that detect and protect against strategic manipulation. 
\end{enumerate}

\subsubsection{Outline of the paper} Before continuing, we describe the structure of the remaining sections. In Sect. \ref{sec:TriadicConsensus}, we detail Triadic Consensus and introduce the notion of quasi-truthfulness. This is followed by Sect. \ref{sec:approx_comm}, which presents the approximation and communication complexity results, and Sect. \ref{sec:nash}, which describes the quasi-truthfulness results. Finally, Sect. \ref{sec:future} concludes with future directions.

\section{Triadic Consensus and Quasi-truthfulness}\label{sec:TriadicConsensus}

\subsection{Triadic Consensus}\label{subsec:TriadicConsensus}
\begin{algorithm}[H]
    \SetAlgoVlined
    \KwIn{An urn with $k$ labeled balls for each participant $1, 2, \ldots, n$}
    \KwOut{A winning candidate $i$.}
    \While{there is more than one label}{
        Sample three balls (with labels $x, y, z$) uniformly at random with replacement\;
        $w = \text{TriadicVote}(x, y, z)$\;
        \eIf{$w \neq \emptyset$}{
            Relabel all the sampled balls with the winning label $w$\;
        }{
            \tcc{For example, remove the three sampled balls from the urn}
            $\text{TriadicMechanism}(x, y, z)$\; 
        }
    }
    \eIf{at least one ball remains}{
        \Return the id of any remaining ball\;
    }{
        \Return the id of a random ball from the last removed\;
    }
    \caption{Triadic Consensus}
    \label{alg:TriadicA}
\end{algorithm}

\begin{algorithm}[H]
    \SetAlgoVlined
    \KwIn{Candidates $x, y, z$}
    \KwOut{One of $\{x, y, z\}$ if there is a winner, $\emptyset$ otherwise}
    \If{two of more of $x, y, z$ have the same id}{
        \Return the majority candidate\;
    }
    $x$ votes between $y$ and $z$; $y$ votes between $x$ and $z$; $z$ votes between $x$ and $y$\;
    \eIf{each received exactly one vote}{
        \Return $\emptyset$\;
    }{
        \Return the candidate with two votes\;
    }
    
    \caption{TriadicVote}
    \label{alg:Triad}
\end{algorithm}
\vspace{1mm}

Triadic Consensus applies to scenarios where the set of candidates and voters coincide. We use $x$ to refer to both the participant $x$ and the candidate solution that he or she proposes. For $x, y, z \in \{1, 2, \ldots, n\}$, we use $\succ_x$ to denote the ranking of participant $x$ and $y \succ_x z$ to denote that $x$ prefers $y$ over $z$.\footnote{We assume a strict ordering, but it is not hard to generalize the algorithm to ties.}

The best way to understand Triadic Consensus (Alg. \ref{alg:TriadicA}) is to imagine an urn with balls, each of which is labeled by a participant id. The urn starts with $k$ balls for each of the $n$ participants.\footnote{The intuition for $k$ is that it is a tradeoff between approximation and time. Increasing $k$ makes the approximation tighter, but requires more comparisons to converge.} At each step, the algorithm samples three balls uniformly at random (with replacement) and performs a TriadicVote (Alg. \ref{alg:Triad}) on the three corresponding participants.

If the three participants $x$, $y$, and $z$ are unique, the TriadicVote subroutine consists of a single comparison for each of the selected participants: $x$ votes between $y$ and $z$, $y$ between $x$ and $z$, and $z$ between $x$ and $y$. These votes can be distributed in some permutation of $2, 1, 0$ or split $1, 1, 1$. In the first case, the participant who received two votes is returned as the winner. In the second case, a tie (represented as $\emptyset$) is returned. If two or more of the selected ids are the same, i.e. are the same person, then he is automatically returned as the winner.

If a winner was returned from the TriadicVote, then the three balls are relabeled with the winning id and placed back into the urn; otherwise, one of several mechanisms can be applied to resolve the tie. This process is repeated until there is only one participant id remaining, which is declared the winner. A helpful intuition is that the mechanism implemented in the case of a tie can be thought of as a deterrent for manipulation since a three-way tie can never occur if participants are voting truthfully (see Sect. \ref{sec:quasitruthful}).

In our paper, we propose two possible mechanisms, each of which has a quasi-truthful Nash equilibrium. The simplest is Remove (Alg. \ref{alg:mechanismA}), in which the three balls are simply removed. In RepeatThenRemove (Alg. \ref{alg:mechanismB}), the three balls are made to vote again; if there is another three way tie, then they are removed. Surprisingly, repeating the TriadicVote before elimination results in a simpler (and more practical) strategy that is a quasi-truthful Nash equilibrium.

\begin{algorithm}[t]
    \SetAlgoVlined
    \KwIn{Balls $x$, $y$, and $z$}
    Remove the three sampled balls from the urn\;
    \caption{The Remove mechanism}
    \label{alg:mechanismA}
\end{algorithm}
\begin{algorithm}[t]
    \SetAlgoVlined
    \KwIn{Balls $x$, $y$, and $z$}
    $w = \text{TriadicVote}(x, y, z)$\;
    \eIf{$w \neq \emptyset$}{
        Relabel all the sampled balls with the winning label $w$\;
    }{
        $\text{Remove}(x, y, z)$\; 
    }
    \caption{The RepeatThenRemove mechanism}
    \label{alg:mechanismB}
\end{algorithm}





\subsection{Truthfulness and Quasi-truthfulness}\label{sec:quasitruthful}

For our analysis of strategic behavior, we will assume that each individual is represented as a point $x$ in some space $X$ and that his or her preference ranking is induced by a distance metric $d(x, \cdot)$ on $X$. If $d(x, y) \leq d(x, z)$, then $y \succ_x z$; that is, $x$ prefers proposals that are closer to him. Since the individuals voting in a TriadicVote are also the candidates being voted for, there can never be a three-way tie in a truthful vote. Otherwise, all three of $d(x, y) < d(x, z)$, $d(y, z) < d(y, x)$, and $d(z, x) < d(z, y)$ must be simultaneously true, which is impossible so long as $d(\cdot, \cdot)$ satisfies the natural property that $d(x, y) = d(y, x)$.\footnote{If voter rankings allow ties, then a three-way tie could be truthful. In this case, a simple generalization of the TriadicVote that aligns with the intuition of punishing manipulation could be used.}

Consider a TriadicVote between participants $x$, $y$, and $z$. If they vote truthfully, then there are situations when players may be incentivized to deviate as in Ex. \ref{ex:truthful}.

%

\begin{example}\label{ex:truthful}
    Four participants lie in space $X = \mathbb{R}$ at positions $0$, $5$, $6$, and $7$. Suppose participants $0$, $5$, and $7$ are selected for a TriadicVote. Since they are voting truthfully, $0$ votes for $5$, $5$ votes for $7$, and $7$ votes for $5$. As a result, $5$ wins and the resulting urn consists of three balls for $5$ and one for $6$.

Now suppose participant $7$ were to vote strategically for participant $0$. This would result in a tie and all the selected balls would be eliminated, leaving only participant $6$. Clearly, participant $7$ would prefer this second situation.
\end{example}

At this point, we might note that the truthful winner's vote (e.g. $5$) does not change the result and that he can use his vote to disincentivize others from manipulating the TriadicVote. We define any such behavior to be quasi-truthful when it results in the same outcome as that of truthful voting. 
\begin{example}\label{ex:quasi}
    Suppose that the participants of Ex. \ref{ex:truthful} are trying to minimize the expected distance of the winning proposal to their position. Then a quasi-truthful strategy would be for $0$ to vote for $5$, $5$ to vote for $0$ and $7$ to vote for $5$. As in Ex. \ref{ex:truthful}, $5$ wins and the resulting urn consists of three balls for $5$ and one for $6$.

Now suppose participant $7$ deviates from this strategy and votes for $0$. Then participant $0$ gets two votes and he wins. The resulting urn consists of three balls for $0$ and one for $6$, which is clearly worse for participant $7$. Likewise, suppose participant $0$ deviates from this strategy and votes for $7$. Then there is a three-way tie and all selected balls get eliminated. The resulting urn consists of a single ball for $6$, which is clearly worse for participant $0$.
\end{example}
\begin{algorithm}[t]
    \SetAlgoVlined
    \KwIn{Voter $x$, candidates $y, z$}
    \KwOut{One of $\{y, z\}$}
    \eIf{$x$ thinks he should win}{
        \uIf{$y$ would prefer a win for $x$ rather than a three-way tie in a truthful world}{
            \Return $y$\;
        }
        \Else{
            \Return $z$\;
        }
    }{
        \Return a truthful comparison between $y$ and $z$\;
    }
    \caption{Quasi-truthful Nash for the Remove mechanism}
    \label{alg:quasitruthfulnash}
\end{algorithm}

From this example, we get the intuition for Alg. \ref{alg:quasitruthfulnash}, a quasi-truthful Nash for the Remove mechanism. If a participant ($y$ WLOG) is the truthful winner, then he should look for the participant who would prefer a win for him over a removal of all three balls ($x$ WLOG). Then if $y$ votes for $x$, this will disincentivize $x$ from deviating. Since $z$ can only cause $x$ to win by deviating, it seems intuitively bad for him to deviate since the winner will be strictly farther. If all players have concave utilities, it turns out that there always must be a participant that prefers $y$ to win over a removal of all three balls. This intuition is translated into a rigorous proof in Sect. \ref{sec:nash}.

The same idea gives us Alg. \ref{alg:simpquasitruthful}, a quasi-truthful Nash for the RepeatThenRemove mechanism and a more practical strategy to implement. In this strategy, $x$ simply chooses an arbitrary participant to vote for first; if there is a tie, he should then switch his vote. It turns out that the proof structure for this is almost identical to that of Alg. \ref{alg:quasitruthfulnash}. If $y$ happens to vote for $x$ in the first round, then he is safe according to the reasoning of Alg. \ref{alg:quasitruthfulnash}. However, if $y$ happens to vote for $z$, the only thing $z$ can do is to cause a repeat vote, during which $y$ will vote for $x$, bringing us back to the realm of Alg. \ref{alg:quasitruthfulnash}'s strategy.

\begin{algorithm}[t]
    \SetAlgoVlined
    \KwIn{Voter $x$, candidates $y, z$}
    \KwOut{One of $\{y, z\}$}
    \eIf{$x$ thinks he should win}{
        \uIf{it is the first TriadicVote}{
            \tcc{For example, a truthful comparison}
            \Return either of $y$ or $z$\;
        }
        \Else{
            \Return the candidate that he didn't vote for in the first round\;
        }
    }{
        \Return a truthful comparison between $y$ and $z$\;
    }
    \caption{Quasi-truthful Nash for the RepeatThenRemove mechanism}
    \label{alg:simpquasitruthful}
\end{algorithm}

\section{Triadic Consensus approximates the Condorcet winner with low communication complexity}\label{sec:approx_comm}

\subsection{Background: Fixed size urns and urn functions}

The primary idea in proving the results in this section is to reduce the Triadic Consensus urn to previously known results for fixed size urns with urn functions. A fixed size urn contains some number of balls, which are each colored either red or blue. Let $R_t$ and $B_t$ be the number of red and blue balls respectively at time $t$, where $R_t + B_t = n$. Also, let $p_t = \frac{R_t}{n}$ denote the fraction of red balls. At every discrete time $t$, either a red ball is sampled with probability $f(p_t)$, a blue ball is sampled with probability $f(1-p_t)$, or nothing happens with the remaining probability. The function $f : [0,1]\to [0,1]$ is called an urn function and satisfies $0 \leq f(x) + f(1-x) \leq 1$ for $0 \leq x \leq 1$. If a ball was sampled, it is then recolored to the opposite color and placed back into the urn. This process repeats until some time $T$ when all the balls are the same color, i.e. $R_T = n$ or $R_T = 0$. 

We will show in the following section that Triadic Consensus is closely related to fixed size urns with urn function $f(p) = 3p(1-p)^2$. We will then use the following theorems derived from those in Lee and Bruck \cite{Lee2012}\footnote{Theorem \ref{thm:urn_time} requires some algebra that may not be immediately clear from the general theorem stated in the reference. For the convenience of the reader, we include these calculations in Appendix \ref{appsec:supporting}.}.

\begin{theorem}\label{thm:urn}
    Let a fixed size urn start with $R_0$ red balls out of $n$ total balls and have an urn function $f(p) = 3p(1-p)^2$. Let $T$ denote the first time when either $R_T = n$ or $R_T = 0$. Then,
    \[\prob[R_T = n] = \left(\frac{1}{2}\right)^{n-1}\sum\limits_{j = 1}^{R_0}\binom{n-1}{j-1}\]
\end{theorem}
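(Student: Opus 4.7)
The plan is to use first-step analysis on the Markov chain $(R_t)$. Define $h(r) = \prob[R_T = n \mid R_0 = r]$, with boundary values $h(0) = 0$ and $h(n) = 1$. From state $r$ with $p = r/n$, the only transitions that change $R$ are $R \to R-1$ with probability $f(p) = 3p(1-p)^2$ and $R \to R+1$ with probability $f(1-p) = 3p^2(1-p)$; the self-loop contribution cancels from both sides of the first-step equation. After collecting terms, the key algebraic move is that the common factor $3p(1-p)$ (strictly positive for $0 < r < n$) divides out, leaving the remarkably clean recurrence
\[
h(r) \;=\; \frac{n-r}{n}\, h(r-1) \;+\; \frac{r}{n}\, h(r+1).
\]
Intuitively, the cubic form of the urn function drops out because only the \emph{ratio} of up-steps to down-steps affects absorption probabilities, and for this $f$ that ratio is simply $p : (1-p)$.

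Next I would solve this second-order linear recurrence by the standard device of passing to first differences $g(r) = h(r+1) - h(r)$. Rearranging the recurrence yields $r \cdot g(r) = (n-r) \cdot g(r-1)$, and iterating telescopically gives
\[
g(r) \;=\; \prod_{k=1}^{r} \frac{n-k}{k}\, g(0) \;=\; \binom{n-1}{r}\, g(0),
\]
where $g(0) = h(1) - h(0) = h(1)$. Summing differences, $h(r) = \sum_{j=0}^{r-1} g(j) = h(1) \sum_{j=0}^{r-1}\binom{n-1}{j}$. Applying the boundary condition $h(n) = 1$ with $\sum_{j=0}^{n-1}\binom{n-1}{j} = 2^{n-1}$ pins down $h(1) = 1/2^{n-1}$. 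Reindexing $j \mapsto j-1$ then produces exactly the expression stated in the theorem.

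There is no real obstacle here once the $3p(1-p)$ cancellation is spotted; the statement essentially reduces Triadic Consensus's absorption behaviour to that of the embedded jump chain, which is a birth--death walk on $\{0,1,\ldots,n\}$ with outward drift toward the absorbing endpoints. The only subtlety is keeping track of the direction of the recurrence (verifying that $R \to R-1$, not $R \to R+1$, is the event with probability $3p(1-p)^2$), which comes directly from the convention that a sampled ball is recoloured to the opposite colour.
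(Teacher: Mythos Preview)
Your argument is correct. The first-step recurrence, the cancellation of the common factor $3p(1-p)$, the passage to first differences, and the normalisation via $\sum_{j=0}^{n-1}\binom{n-1}{j}=2^{n-1}$ all check out, and the final reindexing gives exactly the claimed formula.

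The paper, however, does not prove this theorem at all: it simply imports the result from Lee and Bruck~\cite{Lee2012} as a black box (only Theorem~\ref{thm:urn_time} gets supporting calculations in the appendix). So your route is genuinely different in that it supplies a self-contained, elementary birth--death argument where the paper appeals to an external reference. What your approach buys is transparency---one sees directly that only the ratio $f(1-p)/f(p)=p/(1-p)$ matters for absorption, which is exactly the observation that makes the binomial answer natural. What the citation buys the paper is brevity and a pointer to a more general framework for urn functions.
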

\begin{theorem}\label{thm:urn_time}
    Let a fixed size urn start with $R_0$ red balls out of $n$ total balls and have an urn function $f(p) = 3p(1-p)^2$. Let $T$ denote the first time when either $R_T = n$ or $R_T = 0$. Then,
    \[\mathbb{E}[T] \leq n\ln n + O(n)\]
\end{theorem}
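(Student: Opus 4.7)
The plan is to analyze the urn as a birth--death chain on $\{0, 1, \ldots, n\}$ with absorbing states $0$ and $n$. Setting $p = R/n$, the one-step transitions from state $R$ are $P(R \to R-1) = 3p(1-p)^2$, $P(R \to R+1) = 3p^2(1-p)$, and the chain holds with probability $1 - 3p(1-p)$. First-step analysis on $\phi(R) := \mathbb{E}[T \mid R_0 = R]$ gives
\[
\phi(R) \;=\; \frac{n^2}{3R(n-R)} \;+\; \frac{R}{n}\,\phi(R+1) \;+\; \frac{n-R}{n}\,\phi(R-1),\qquad \phi(0)=\phi(n)=0,
\]
where the first summand is the expected holding time at $R$ (geometric with success probability $3p(1-p)$, the chance that a fresh triple is bichromatic), and the remaining terms are the transition probabilities of the embedded ``active'' chain. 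By symmetry $\phi(R) = \phi(n-R)$, so it suffices to bound $\phi(R_0)$ for $R_0 \le n/2$.

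To solve this recurrence I would introduce the successive differences $d(R) := \phi(R) - \phi(R-1)$, which satisfy
\[
d(R+1) \;=\; \frac{n-R}{R}\,d(R) \;-\; \frac{n^{3}}{3R^{2}(n-R)}.
\]
Since the homogeneous solution is $\binom{n-1}{R-1}$, the substitution $d(R) = u(R)\binom{n-1}{R-1}$ telescopes the recurrence into
\[
u(R+1) - u(R) \;=\; -\,\frac{n^{3}}{3R^{2}(n-R)\binom{n-1}{R}}.
\]
Summing and imposing $\sum_{R=1}^{n} d(R) = \phi(n)-\phi(0) = 0$ fixes the integration constant and yields a closed form for $\phi(R_0)$ as a double sum of binomial ratios. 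I expect this is essentially the ``algebra'' deferred to the appendix: take the general Lee--Bruck formula for absorption time in terms of $f$ and specialize it to $f(p)=3p(1-p)^2$.

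The main work is then estimating this double sum to $n\ln n + O(n)$. Heuristically, the embedded chain has position-dependent drift $(2R-n)/n$ toward the nearest absorbing endpoint, so the expected number of active visits to state $R$ is of order $n/|n-2R|$ for $R$ far from $n/2$ and of order $\sqrt{n}$ within a $\sqrt{n}$-neighborhood of $n/2$. Weighting each visit by its holding cost $n^{2}/(3R(n-R))$ and applying the partial-fraction identity $\frac{1}{R(n-R)} = \frac{1}{n}\bigl(\frac{1}{R}+\frac{1}{n-R}\bigr)$ turns the weighted sums into harmonic-number tails; the drift-dominated regime contributes $nH_{n-1} = n\ln n + O(n)$ while the diffusive central window contributes only $O(n)$. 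The principal obstacle is converting this drift heuristic into a rigorous bound uniform in $R_0$: because the drift is not constant, one cannot simply invoke a standard constant-bias random-walk hitting-time estimate. Obtaining the sharp leading constant $n\ln n$ (rather than a looser $O(n\log^{2} n)$) requires splitting the range at $|R - n/2| \sim \sqrt{n}$ and using Stirling-type estimates on the central binomial coefficients to keep the error at $O(n)$.
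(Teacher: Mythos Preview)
Your birth--death setup and the difference recurrence are correct, but the paper takes a shorter and genuinely different route. It does not derive an exact double-sum formula for $\phi(R_0)$; instead it quotes a ready-made upper bound from Lee--Bruck (valid for any urn function with $f(p)/f(1-p)$ decreasing, and already uniform in the starting state):
\[
\mathbb{E}[T]\;\le\;\frac{1}{q_1}+\sum_{k=1}^{\lfloor n/2\rfloor-1}\frac{q_k}{q_{k+1}(q_k-p_k)},\qquad p_k=f\Bigl(\tfrac{n-k}{n}\Bigr),\quad q_k=f\Bigl(\tfrac{k}{n}\Bigr).
\]
Specialized to $f(p)=3p(1-p)^2$, the summand becomes the rational function $\tfrac{n^3}{3}\cdot\tfrac{n-k}{(k+1)(n-k-1)^2(n-2k)}$, and a four-term partial-fraction decomposition turns the single sum directly into harmonic numbers, yielding $n\ln n+O(n)$. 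No binomial coefficients, no Stirling estimates, and no drift argument appear: the work you flag as the ``principal obstacle''---making the bound rigorous and uniform in $R_0$---is entirely absorbed into the cited theorem, so the appendix algebra is just partial fractions plus $H_k=\ln k+O(1)$.

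Your approach is viable and would in fact produce the \emph{exact} expected absorption time rather than an upper bound, which is a real gain. But the price is the asymptotic analysis of a double sum weighted by $\binom{n-1}{R-1}$, and your sketch for that part (split at $|R-n/2|\sim\sqrt{n}$, control central binomials) is still heuristic. Carrying it through to the sharp constant $n\ln n$ is feasible but substantially more involved than the paper's elementary calculation; your guess that the appendix contains essentially the same double sum is off---the Lee--Bruck bound is already a single sum in closed rational form.
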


\subsection{Reduction from Triadic Consensus to fixed size urns}\label{sec:approx_proof}

Recall that our results are for the case of single-peaked preferences, for which the candidates can be said to lie on some axis. Every voter's utility is described by a peak on that axis which falls off on either direction. Without loss of generality, we let the participant ids be labeled from one end of the axis to the other, i.e. $1 < 2 < \ldots < n$.
\begin{lemma}\label{lem:median_winner}
    Let $x$, $y$, and $z$ be three unique participants whose peaks lie on an axis such that $x < y < z$. Then the winner of a quasi-truthful TriadicVote($x$, $y$, $z$) must be the median participant $y$.
\end{lemma}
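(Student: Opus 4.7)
The plan is to reduce the claim to an analysis of a single truthful TriadicVote, exploiting that quasi-truthfulness is defined (Sect.~\ref{sec:quasitruthful}) to produce the same outcome as truthful voting. First I would invoke single-peakedness of each voter's utility along the axis: since $x < y < z$, the candidate $y$ lies strictly between $x$'s peak and $z$, so the drop-off of $x$'s utility gives $y \succ_x z$; a symmetric argument yields $y \succ_z x$. Consequently, in a truthful TriadicVote, $x$ casts his ballot for $y$ and $z$ casts his ballot for $y$, so $y$ receives at least two of the three votes and wins irrespective of how $y$ himself votes.

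To carry this conclusion over to quasi-truthful play, I would note two complementary points. On the one hand, by the very definition of quasi-truthfulness, any quasi-truthful strategy profile produces the same winner as the truthful profile, so $y$ is the winner. On the other hand, for the two concrete Nash strategies of Alg.~\ref{alg:quasitruthfulnash} and Alg.~\ref{alg:simpquasitruthful}, a voter who does not ``think he should win'' simply submits his truthful comparison; since $y$ is the truthful winner here, neither $x$ nor $z$ believes himself the winner, so both vote truthfully for $y$, and $y$ once again collects at least two votes. In particular, no three-way tie ever arises in this triad, so the choice between the Remove and RepeatThenRemove mechanisms is irrelevant to the outcome.

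There is no substantive technical obstacle: the lemma is essentially a direct consequence of single-peakedness together with the definition of quasi-truthfulness, and its content is the single observation that the two flanking voters always prefer the median candidate to the opposite extreme. The only care required is to phrase the argument at a level of generality that simultaneously covers both of the concrete quasi-truthful Nash strategies introduced, rather than specializing to either one.
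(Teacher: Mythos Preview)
Your proposal is correct and follows the same route as the paper: show that $y$ wins the truthful TriadicVote (because single-peakedness forces $y\succ_x z$ and $y\succ_z x$), then invoke the definition of quasi-truthfulness to conclude. The paper compresses this into a single sentence, and your additional case analysis of Algs.~\ref{alg:quasitruthfulnash} and~\ref{alg:simpquasitruthful} is redundant once the definition is invoked, but nothing in your argument is wrong or meaningfully different.
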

\begin{proof}
    Since $y$ would win in a truthful vote, this follows from the definition of quasi-truthfulness.\qed
\end{proof}

\begin{lemma}\label{lem:prob_winner}
    For single-peaked preferences, let the participant ids be labeled from one end of the axis to the other, i.e. $1 < 2 < \ldots < n$. Color balls with ids $1, 2, \ldots, i$ red and balls with ids $i+1, i+2, \ldots, n$ blue. Then if participants vote quasi-truthfully, Triadic Consensus (for $k = 1$) will produce a red winner with the same probability as that of a fixed size urn with urn function $f(p) = 3p(1-p)^2$.
\end{lemma}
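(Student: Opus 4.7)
The plan is to couple the red/blue count $R_t$ in the Triadic Consensus urn to the fixed-size urn Markov chain by showing that both have identical transition probabilities on the state $R_t \in \{0,1,\ldots,n\}$. First I would note that once all $n$ balls in the TC urn share a color, every subsequent TriadicVote returns a participant of that color, so the count is absorbed and the TC winner necessarily has that color. Consequently, the probability of a red winner in TC equals the probability that $R_t$ is absorbed at $n$, and it suffices to show that the one-step law of $R_t$ in TC matches that of the fixed-size urn with $f(p) = 3p(1-p)^2$.

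Next I would analyze the TriadicVote outcome as a function of the color composition of the three sampled balls (which are i.i.d. since sampling is with replacement). If all three balls are red, the winner is red regardless of whether he is chosen by the automatic-majority rule or by Lem. \ref{lem:median_winner}, so relabeling preserves $R_t$; the all-blue case is symmetric. Now suppose the sample contains exactly two red balls and one blue ball. By construction, all red ids satisfy $\text{id} \leq i$ and the blue id satisfies $\text{id} > i$, so among any three sampled distinct participants of this composition the axis-median is red, and by Lem. \ref{lem:median_winner} the winner is red. If instead two of the sampled balls share an id, they must share a color (the coloring is determined by id), and the majority rule returns that shared participant; in the two-red/one-blue composition the coinciding pair must therefore be red, so the winner is again red. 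Either way $R_{t+1} = R_t + 1$. The one-red/two-blue case is symmetric and yields $R_{t+1} = R_t - 1$.

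It remains to compare transition probabilities. The probability that three i.i.d. draws from the TC urn produce two reds and one blue is $3 p_t^2 (1 - p_t)$, and the probability of one red and two blues is $3 p_t (1 - p_t)^2$; all other outcomes leave $R_t$ unchanged. In the fixed-size urn with $f(p) = 3p(1-p)^2$ we have $\Pr[R_{t+1} = R_t - 1] = f(p_t) = 3 p_t (1-p_t)^2$ and $\Pr[R_{t+1} = R_t + 1] = f(1-p_t) = 3 (1-p_t) p_t^2$, which agree term-for-term with the TC transitions. Since both chains start from the same initial state $R_0$ and evolve by the same one-step kernel, they agree in law, and in particular their absorption probabilities at $R = n$ coincide, giving the claim.

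The step I expect to be the main obstacle is the mixed-color case: after many rounds of relabeling, the TC urn contains many duplicate ids, and sampling with replacement may even draw the same physical ball more than once, so the automatic-majority branch of the TriadicVote can fire. The key observation that makes this harmless is that the coloring depends only on id, so duplicate ids always carry the same color; the majority winner in a (red, red, blue) sample is therefore forced to be red, which is exactly what the median analysis in Lem. \ref{lem:median_winner} predicts for distinct ids. Once this coherence is verified, the reduction to the fixed-size urn is immediate.
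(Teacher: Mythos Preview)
Your proposal is correct and follows essentially the same route as the paper: identify the four color-composition cases, observe that the majority color always wins, and match the resulting one-step transition probabilities to those of the fixed-size urn with $f(p)=3p(1-p)^2$. Your treatment is in fact slightly more careful than the paper's, since you explicitly handle the automatic-majority branch of \textsc{TriadicVote} when sampled ids coincide (the paper invokes Lemma~\ref{lem:median_winner} uniformly, which strictly speaking applies only to three distinct participants); your observation that duplicate ids necessarily share a color closes this small gap without changing the argument.
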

\begin{proof}
    Let $p_r$ and $p_b$ denote the fraction of red and blue balls respectively. Each time three balls are sampled, the median ball must win by Lemma \ref{lem:median_winner}, which implies that the majority color must win. Then we have the following four cases:
    \begin{description}
    \item[{\em Three red}] With probability $p_r^3$, there is no change in colors.
    \item[{\em Two red, one blue}] With probability $3p_r^2 p_b$, one blue ball is recolored red.
    \item[{\em One red, two blue}] With probability $3p_r p_b^2$, one red ball is recolored blue.
    \item[{\em Three blue}] With probability $p_b^3$, there is no change in colors.
    \end{description}
    These are the transition probabilities for a fixed size urn with $i$ red balls, $n-i$ blue balls, and urn function $f(p) = 3p(1-p)^2$. Since every transition probability is identical, the final probability of a red winner must be identical.\qed
\end{proof}

\subsection{Main results}

\begin{theorem}\label{thm:approx_final}
    For single-peaked preferences, let the participant ids be labeled from one end of the axis to the other, i.e. $1 < 2 < \ldots < n$. Then if participants vote quasi-truthfully, Triadic Consensus (for $k=1$) will produce a winner $w$ with probability
    \[\prob[w = i] = \left(\frac{1}{2}\right)^{n-1}\binom{n-1}{i-1}\]
\end{theorem}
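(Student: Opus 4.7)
The plan is to chain Lemma \ref{lem:prob_winner} with Theorem \ref{thm:urn} and then take a discrete difference to extract the pointwise distribution from the cumulative one. The key observation is that the labeling is arbitrary up to choice of split point: for each $i \in \{1, \ldots, n-1\}$ we can color the balls $1, \ldots, i$ red and $i+1, \ldots, n$ blue, run Triadic Consensus, and note that the event ``a red ball wins'' is exactly the event $\{w \le i\}$. This is the bridge we need.

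First, I would fix $i$ and apply this coloring. By Lemma \ref{lem:prob_winner}, the probability that Triadic Consensus (with $k=1$, quasi-truthful voting) produces a red winner equals the probability that a fixed size urn with $R_0 = i$ red balls, $n$ total balls, and urn function $f(p) = 3p(1-p)^2$ hits the absorbing state $R_T = n$. Plugging $R_0 = i$ into Theorem \ref{thm:urn} then gives
\[
\prob[w \le i] \;=\; \left(\frac{1}{2}\right)^{n-1}\sum_{j=1}^{i}\binom{n-1}{j-1}.
\]
I would also record the boundary cases: $\prob[w \le 0] = 0$ (no red balls means the red absorbing state is unreachable) and $\prob[w \le n] = 1$, both consistent with Theorem \ref{thm:urn}.

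Second, I would take the telescoping difference. For $i \ge 2$,
\[
\prob[w = i] \;=\; \prob[w \le i] - \prob[w \le i-1] \;=\; \left(\frac{1}{2}\right)^{n-1}\binom{n-1}{i-1},
\]
and for $i = 1$ we get $\prob[w = 1] = \prob[w \le 1] = \left(\tfrac{1}{2}\right)^{n-1}\binom{n-1}{0}$, which matches. A sanity check via Vandermonde/binomial sum $\sum_{i=1}^n \binom{n-1}{i-1} = 2^{n-1}$ confirms the distribution sums to one.

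There is essentially no hard step here, since the heavy lifting was done in Lemma \ref{lem:prob_winner} (reduction to the fixed size urn) and Theorem \ref{thm:urn} (closed-form absorption probability). The only subtlety to be careful about is that the two colorings used for $i$ and $i-1$ correspond to \emph{different} runs of the fixed size urn process, but both are faithful reductions of the \emph{same} Triadic Consensus process; since the event $\{w \le i\}$ depends only on the winner of Triadic Consensus, not on the coloring, the difference of the two cumulative probabilities is legitimately $\prob[w = i]$. This is the one place the argument could be misread, so I would state it explicitly.
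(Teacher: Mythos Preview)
Your proposal is correct and follows essentially the same approach as the paper: color $1,\ldots,i$ red, invoke Lemma~\ref{lem:prob_winner} and Theorem~\ref{thm:urn} to obtain $\prob[w\le i]$, then subtract $\prob[w\le i-1]$. The additional remarks you make (boundary cases, the sanity check, and the observation that the coloring is auxiliary to a single underlying Triadic Consensus run) are sound elaborations, but the core argument is identical to the paper's.
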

\begin{proof}
If balls $1, 2, \ldots, i$ are colored red, then $w \leq i$ iff the winning ball is red. Then applying Theorem \ref{thm:urn} and Lemma \ref{lem:prob_winner}, we get $\prob[w \leq i]$. By subtracting $\prob[w \leq i-1]$ from $\prob[w \leq i]$, we get our final expression.\qed
\end{proof}
    
A similar argument extends the above theorem for general $k$. Using standard probabilistic arguments\cite{Motwani1995}, we get the following corollary.
\begin{corollary}
    Let there be $n$ single-peaked participants and let $w$ denote the winning id after running Triadic Consensus with $k = O(\frac{1}{\epsilon^2}\log \frac{1}{\delta})$. Then assuming that participants vote quasi-truthfully, $w$ will be a $(1-\epsilon/\sqrt{n})$ approximation of the Condorcet winner with probability at least $1 - \delta$.
\end{corollary}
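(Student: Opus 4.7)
The plan is to lift Theorem \ref{thm:approx_final} to general $k$ and then apply a concentration bound to the resulting binomial-shaped distribution. For the first step, I would generalize Lemma \ref{lem:prob_winner} by coloring the $ik$ balls belonging to participants $1,\ldots,i$ red and the remaining $(n-i)k$ balls blue, so that the urn now contains $N = nk$ balls in total. When three balls are sampled uniformly with replacement, balls sharing an id automatically share a color, so the color composition of the sampled triple falls into one of the four cases analyzed in the proof of Lemma \ref{lem:prob_winner}. In each case the quasi-truthful winner (the majority id when two coincide, the median id by Lemma \ref{lem:median_winner} when all three are distinct) has the majority color, and relabeling flips either zero or one ball's color --- giving exactly the transition probabilities of the urn function $f(p) = 3p(1-p)^2$ on $N = nk$ balls. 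Theorem \ref{thm:urn} then yields
\[\prob[w \leq i] \;=\; \left(\tfrac{1}{2}\right)^{nk-1}\sum_{j=1}^{ik}\binom{nk-1}{j-1}.\]

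The right-hand side is $\prob[X + 1 \leq ik]$ for $X \sim \mathrm{Bin}(nk-1,\tfrac{1}{2})$. Writing $i^* = \lceil n/2\rceil$ for the Condorcet (median) winner and noting that $\mathbb{E}[X] = (nk-1)/2 = i^* k - O(1)$, the event $|w - i^*| > \epsilon\sqrt{n}$ coincides (up to an $O(1)$ integer slack) with $|X - \mathbb{E}[X]| > \epsilon\sqrt{n}\,k$. Hoeffding's inequality applied to the sum of $nk-1$ independent $\mathrm{Bernoulli}(\tfrac{1}{2})$ variables then gives
\[\prob\!\left[\,|X - \mathbb{E}[X]| > \epsilon\sqrt{n}\,k\,\right] \;\leq\; 2\exp\!\left(-\frac{2\epsilon^2 n k^2}{nk-1}\right) \;\leq\; 2e^{-\epsilon^2 k},\]
so choosing $k = C\epsilon^{-2}\log(2/\delta)$ for a suitable constant $C$ drives this probability below $\delta$. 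The winner $w$ then lies within $\epsilon\sqrt{n}$ of $i^*$ with probability at least $1-\delta$, which is a $(1-\epsilon/\sqrt{n})$-approximation (in rank) of the Condorcet winner.

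The only genuinely delicate step is the generalization of Lemma \ref{lem:prob_winner} for $k > 1$, because the sampled triple can now contain repeated ids and the \textsc{TriadicVote} rule branches on this. Fortunately the argument simplifies because two balls with the same id necessarily share a color: the majority-id winner is always the majority-color winner, so the number of color flips in a round is determined solely by the color composition of the three-ball sample, reproducing the $k = 1$ transition probabilities term for term. Once that reduction is in hand the rest is a standard Chernoff/Hoeffding tail bound, and the constants in the statement of the corollary can be matched by tracking the $O(1)$ rounding slack carefully.
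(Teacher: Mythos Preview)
Your proposal is correct and follows exactly the route the paper indicates: extend Theorem~\ref{thm:approx_final} to general $k$ by the same coloring reduction (the paper merely says ``a similar argument extends the above theorem for general $k$''), then apply a standard Chernoff/Hoeffding tail bound to the resulting $\mathrm{Bin}(nk-1,\tfrac12)$ distribution (the paper's ``standard probabilistic arguments~\cite{Motwani1995}''). Your observation that same-id balls necessarily share a color is precisely what makes the $k>1$ reduction go through, and the paper leaves this implicit.
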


\begin{theorem}
    For single-peaked preferences and quasi-truthful voting, Triadic Consensus has a total communication complexity of $O(kn \log^2 (kn))$.
\end{theorem}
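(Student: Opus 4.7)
My plan is to bound the expected number of TriadicVote invocations, since each TriadicVote uses at most three pairwise comparisons and therefore contributes $O(1)$ to communication. Let $N = kn$ denote the total number of balls in the urn. For each $i \in \{1,\ldots,n-1\}$, color the balls with ids $\le i$ red and the rest blue. By Lemma~\ref{lem:prob_winner}, under this coloring the count of red balls evolves exactly as a fixed-size urn of size $N$ with urn function $f(p) = 3p(1-p)^2$; let $T_i$ be the first time this associated urn becomes monochromatic. The key structural observation is that Triadic Consensus terminates at exactly $T := \max_i T_i$: if every partition is monochromatic, then letting $i^\ast$ be the smallest $i$ at which the urn is ``all red'' forces every remaining ball to have id exactly $i^\ast$.

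Next I would apply Theorem~\ref{thm:urn_time} to obtain $\mathbb{E}[T_i] \le N \ln N + O(N)$ uniformly over the starting configuration. Combined with Markov's inequality and the Markov property of the urn (restarting the analysis at multiples of $cN\log N$), this upgrades to an exponential tail
\[
\prob[T_i > \ell \cdot c\, N \log N] \le 2^{-\ell}
\]
for some constant $c > 0$ and every integer $\ell \ge 1$. A union bound over the $n-1$ partitions then yields $\prob[T > t] \le 2n \cdot 2^{-t/(c N \log N)}$.

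Integrating this tail gives
\[
\mathbb{E}[T] \;=\; \int_0^\infty \prob[T > t]\,dt \;=\; O(N \log N \cdot \log n) \;=\; O(kn \log^2 (kn)),
\]
using $\log n \le \log(kn)$. Since every TriadicVote costs at most three comparisons, the total expected communication is $O(kn \log^2 (kn))$, as claimed.

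The main obstacle is the tail argument: Theorem~\ref{thm:urn_time} provides only an expectation bound on a single $T_i$, so to control $\max_i T_i$ I must first amplify this to an exponential tail (via Markov plus restarts) before a union bound over the $n-1$ partitions can be taken without losing an extra factor of $n$. Once the tail is in place, the identification $T = \max_i T_i$ and the integral calculation are routine.
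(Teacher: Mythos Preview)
Your argument is correct but takes a genuinely different route from the paper. The paper's proof is shorter and avoids the tail-bound amplification altogether by treating the partitions \emph{sequentially} rather than simultaneously: it colors the leftmost half of the id-range red and the rest blue, waits until this single coloring becomes monochromatic (expected time at most $N\ln N + O(N)$ by Theorem~\ref{thm:urn_time}, which holds uniformly in the starting state), then recolors the surviving half of the id-range into two halves and repeats. After $\lceil\log_2 n\rceil$ such rounds only one id can remain, and linearity of expectation over the rounds immediately gives $\mathbb{E}[T]\le \lceil\log_2 n\rceil\cdot(N\ln N + O(N)) = O(kn\log^2(kn))$.

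So the paper never needs anything beyond the expectation bound of Theorem~\ref{thm:urn_time} and linearity; your approach, by contrast, must first boost that expectation to an exponential tail (via Markov plus the Markov-property restart) before the union bound over all $n-1$ partitions can be taken without paying an extra factor of $n$. Your route is a bit more work, but it buys you a high-probability bound on the running time as a byproduct, and your identification $T=\max_i T_i$ (which implicitly uses that each two-coloring is absorbing once monochromatic) is a clean structural observation that the paper's halving argument does not make explicit. One minor note: Lemma~\ref{lem:prob_winner} is stated for $k=1$, so when you invoke it for the $N=kn$-ball urn you are relying on the obvious extension the paper mentions after Theorem~\ref{thm:approx_final}.
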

\begin{proof}
    Theorem \ref{thm:urn_time} is an upper bound on the expected time it takes to halve the number of remaining participants (since we can color half the participants red and half blue). For $kn$ balls, this gives us less than or equal to $kn\ln(kn) + O(kn)$ expected time to halve the participants. Each time the urn converges to a single color, we can recolor half the remaining participants and repeat. After $\log n$ rounds, we will be done.\qed
\end{proof}

The above theorem is an upper bound on the communication complexity. In reality, at each recoloring, the balls will not be split evenly between the two colors. Based on this intuition and simulations, we conjecture that the communication complexity is only $O(kn\log kn)$.
\begin{conjecture}
    For single-peaked preferences and quasi-truthful voting, Triadic Consensus has a total communication complexity of $O(kn \log kn)$.
\end{conjecture}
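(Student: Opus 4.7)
The proven bound is $O(kn \log^2(kn))$, and the conjecture closes a $\log(kn)$ factor that is an artefact of the halving argument, which applies Theorem \ref{thm:urn_time} to $\Theta(\log n)$ nested sub-problems in sequence. To prove the conjecture one must analyse the multi-label urn globally. Writing $N = kn$, the starting point is Lemma \ref{lem:prob_winner}: for every threshold $i \in \{1,\ldots,n-1\}$, the projected counts $R_i(t) = \sum_{j \leq i} c_j(t)$ evolve exactly as the two-color urn with function $f(p) = 3p(1-p)^2$, and Triadic Consensus terminates precisely when all $n-1$ projected urns are absorbed, i.e.\ $T = \max_i T_i$ where $T_i$ is the fixation time of the $i$-th projected urn. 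Theorem \ref{thm:urn_time} gives $\mathbb{E}[T_i] = O(N \log N)$ for each $i$, so the conjecture amounts to showing that the maximum is not much larger than a typical $T_i$; this is at least plausible because the $n-1$ projected urns are driven by a single shared sequence of triadic samples, a coupling any proof must exploit.

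My plan is a potential-function argument. I would seek $\Phi_t = \sum_{i=1}^{n-1} g(R_i(t))$ with $g \geq 0$, $g(0) = g(N) = 0$, $\Phi_0 \leq \mathrm{poly}(N)$, and the multiplicative drift condition $\mathbb{E}[\Phi_{t+1} \mid \mathcal{F}_t] \leq (1 - c/N)\,\Phi_t$ for some constant $c > 0$. Standard Markov plus tail-integration then yields $\prob[T > t] \leq e^{-ct/N}\Phi_0$ and hence $\mathbb{E}[T] = O(N \log \Phi_0) = O(N \log N)$. The natural linear choice $g(R) = R(N-R)$ is too weak: a direct computation of the triadic transitions gives $\mathbb{E}[\Delta(R_i B_i) \mid \mathcal{F}_t] = -3 p_i(1-p_i)\bigl[(1-2p_i)^2 N + 1\bigr]$, which near $p_i = 1/2$ is only a constant while $R_i B_i = \Theta(N^2)$, so no multiplicative contraction at rate $1/N$ is obtained. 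The additive ``expected remaining time'' potential $g(R) = h_N(R)$ has drift exactly $-1$ per active threshold (since the projection onto threshold $i$ is exactly the two-color urn) and so only recovers the trivial $\mathbb{E}[T] \leq \Phi_0 = O(n N \log N)$, which is worse than the halving bound. My first concrete candidate would therefore be a concave correction such as $g(R) = \log\bigl(1 + R(N-R)\bigr)$, or an unbalance-weighted mixture, chosen so that the sharp acceleration of the two-color urn near the boundary contributes a constant-rate multiplicative decrease that is not suppressed by the balanced interior.

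The main obstacle, and what makes this a genuine project rather than a routine calculation, is the balanced regime in which many $p_i$ are simultaneously close to $1/2$. There the drift of every single two-color urn vanishes to order $1/N$ and progress is driven entirely by variance, so a working potential must either aggregate variance contributions across the $n-1$ thresholds or directly exploit the fact that a single triadic sample moves many thresholds at once via the shared coupling. A secondary route, should no clean potential emerge, is to sharpen Theorem \ref{thm:urn_time} to a tail bound $\prob[T_i > t] \leq 2^{-\Omega(t/N)}$ for $t \geq C N \log N$ (rather than the weaker $2^{-\Omega(t/(N \log N))}$ which is all that follows from restart arguments applied to the mean); combined with $T = \max_i T_i$ and a union bound over the $n-1$ thresholds this also yields $\mathbb{E}[T] \leq O(N \log N) + O(N \log n) = O(N \log N)$, but it requires extracting concentration of $T_i$ from the analysis of Lee and Bruck \cite{Lee2012} beyond what is needed for the mean.
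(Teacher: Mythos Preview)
The statement you are addressing is a \emph{conjecture} in the paper, not a theorem: the authors do not prove it. Their only support is the informal remark that the successive recolorings in the halving argument will in practice be unbalanced, together with simulation evidence. There is therefore no ``paper's proof'' to compare against; your submission is a research plan toward an open problem and should be read as such.

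As a plan it is well-aimed. The identity $T=\max_i T_i$ via Lemma~\ref{lem:prob_winner} is correct and is the right structural starting point, and your drift computation for $R_iB_i$ is correct and correctly rules that potential out. Your diagnosis of the obstacle---many $p_i$ simultaneously near $1/2$, where per-threshold drift degenerates---is on target. One caution: the concave correction $g(R)=\log\bigl(1+R(N-R)\bigr)$ does not obviously help. Near $p_i=1/2$ the drift of $g(R_i)$ is $O(1/N^2)$ while $g(R_i)=\Theta(\log N)$, so the implied multiplicative rate is $O\bigl(1/(N^2\log N)\bigr)$, worse than for the quadratic. Any potential that contracts at rate $c/N$ uniformly must credit the \emph{variance} contribution in the balanced regime, not just the first-moment drift; something closer to an exponential Lyapunov function in $|R-N/2|$, or a direct second-moment argument near $p=1/2$, is more likely to succeed than a concave transform of $RB$.

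Your secondary route---upgrade Theorem~\ref{thm:urn_time} to a tail bound $\prob[T_i>t]\le 2^{-\Omega(t/N)}$ for $t\ge CN\log N$ and union-bound over the $n-1$ thresholds---looks the more tractable of the two. The drift $3p(1-p)(2p-1)$ is unstable at $p=1/2$ and pushes the process toward the boundaries, so once $|p_i-1/2|$ is bounded away from zero the remaining fixation time has exponentially small tails on scale $N$ by standard biased-walk estimates. The genuine work is controlling the escape time from a neighborhood of $p=1/2$, where the process is near-critical; establishing sub-exponential tails on scale $N$ for that escape is what would close the conjecture, and nothing in the paper (or in the cited urn results of \cite{Lee2012} as quoted) supplies it.
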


\subsection{How bad is $O(\sqrt{n})$ error?}\label{sec:how_bad}

We would like to point out that for crowdsourcing applications, participants can often be viewed as noisy samples from some underlying distribution. Because of this, even if we were to find the exact Condorcet winner of the noisy sample, this may not significantly improve the variance.

For example, suppose that the participants are drawn independently and uniformly from $[0, 1]$ so that the true Condorcet candidate would be one in position $\frac{1}{2}$. Now suppose that $\frac{1}{2}$ lies between the $k$-th and $k+1$-th sampled participant. Then $k$ is clearly binomially distributed, which is the same distribution as the case of Triadic Consensus. In other words, the exact Condorcet winner of the sampled participants has a standard deviation of $\frac{1}{2}\sqrt{n}$ participants between him and the true Condorcet winner of the underlying distribution. Since the approximate Condorcet winner produced by Triadic Consensus also $O(\sqrt{n})$ standard deviation, they have the same order of error.

\subsection{Triadic Consensus eliminates outliers quickly}\label{sec:outliers}

Triadic Consensus has the intuition of quickly eliminating outliers since each participant needs to convince two other participants to vote for him in order to win. We will give two observations supporting this intuition. First, we consider two seemingly powerful algorithms that have access to side information that allows them to directly eliminate outliers in various ways. Despite their use of this knowledge, they choose a winner with the exact same probability distribution as that of Triadic Consensus. Second, we compare Triadic Consensus to a non-Triadic random sampling algorithm and show that Triadic Consensus does much better at picking a central candidate for each round.

\subsubsection{Two seemingly powerful algorithms} Consider algorithms RemoveRandomExtreme (Alg. \ref{alg:removerandom}) and ContractExtremes (Alg. \ref{alg:contract}). Each of these algorithms assume a set of participants which have single-peaked preferences and have some limited access to the two extremes (the participants with the leftmost and rightmost peaks). In RemoveRandomExtreme, one of the two extreme balls is randomly chosen and thrown out. In ContractExtremes, a random ball is chosen, and both of the extreme balls are moved to this chosen ball. We show that they choose a winner with the same probability distribution as that of Triadic Consensus.

\begin{theorem}
    For single-peaked preferences, let the participant ids be labeled from one end of the axis to the other, i.e. $1 < 2 < \ldots < n$. Then RemoveRandomExtreme and ContractExtremes will produce a winner $w$ with probability
    \[\prob[w = i] = \left(\frac{1}{2}\right)^{n-1}\binom{n-1}{i-1}\]
\end{theorem}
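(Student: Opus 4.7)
My plan is to prove the two claims separately but via parallel arguments, each reducing to a Pascal-type recursion whose closed form is the binomial $\binom{n-1}{i-1}/2^{n-1}$. For RemoveRandomExtreme, a direct combinatorial argument suffices: the algorithm runs for exactly $n-1$ rounds, each of which independently removes either the current min-id or the current max-id ball with probability $1/2$. Ball $i$ is the final survivor precisely when the sequence of $n-1$ decisions contains exactly $i-1$ ``remove min'' choices (eliminating $1, 2, \ldots, i-1$ in order) and $n-i$ ``remove max'' choices (eliminating $n, n-1, \ldots, i+1$ in order). There are $\binom{n-1}{i-1}$ such sequences out of $2^{n-1}$ equally likely ones, which yields the formula immediately. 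Equivalently, an induction on $n$ conditioning on the first removal, together with Pascal's identity, gives the same result.

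For ContractExtremes, I would again induct on the number of distinct surviving ids, conditioning on the ball $j$ chosen in the first step. After one contraction, balls $1$ and $n$ are replaced by copies of $j$, so the remaining configuration reduces to a smaller instance of ContractExtremes on the surviving distinct ids. The key subclaim is an invariance property: the distribution over the winning id depends only on the current set of distinct ids (and their order on the axis), not on the multiplicity with which each id appears. Given this invariance, the inductive hypothesis yields the binomial distribution on the surviving ids, and averaging over the uniform choice of $j$ plus one application of Pascal's identity closes the induction. As an alternative route, one may instead adopt the coloring reduction of Lemma \ref{lem:prob_winner}, proving for each threshold $i$ that the probability of a red winner matches that of the fixed-size urn in Theorem \ref{thm:urn}, and then differencing to recover $\prob[w=i]$.

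The main obstacle is establishing the multiplicity-invariance needed for the ContractExtremes induction. Unlike RemoveRandomExtreme, contractions introduce duplicate balls into the urn, so one must argue carefully that these duplicates do not bias the algorithm's future choices in a way that distorts the winner distribution. Once that invariant is nailed down (either directly or via a coupling between configurations that differ only in multiplicity), the rest of the proof is an essentially identical Pascal unfolding to the RemoveRandomExtreme case.
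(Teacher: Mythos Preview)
Your treatment of \textsc{RemoveRandomExtreme} is correct and in fact cleaner than the paper's. The paper simply invokes ``the same coloring technique as that of the proof for Lemma~\ref{lem:prob_winner}'' for both algorithms, i.e.\ colour balls $1,\ldots,i$ red, track the red/blue birth--death dynamics, and difference the resulting cdf. Your direct bijection between sequences of $n-1$ independent left/right coin flips and surviving ids bypasses that machinery entirely; the observation that a sequence with exactly $k$ ``remove-left'' flips leaves ball $k+1$ regardless of interleaving is all that is needed.

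For \textsc{ContractExtremes}, however, your primary plan has a genuine gap: the multiplicity-invariance you identify as the ``main obstacle'' is in fact \emph{false}. Sampling in \textsc{ContractExtremes} is uniform over \emph{balls}, not over distinct ids, so duplicate balls created by a contraction do bias all subsequent samples. Concretely, take three balls with id multiset $\{1,1,2\}$. The induced red/blue chain on the count of id~$1$ moves $2\to 3$ with probability $2/3$ and $2\to 1$ with probability $1/3$; solving the two-state recursion gives $\prob[w=1]=3/4$. By contrast, a fresh two-ball instance on ids $\{1,2\}$ gives $\prob[w=1]=1/2$. So after one contraction the process is \emph{not} equivalent to a smaller \textsc{ContractExtremes} on the surviving distinct ids, and the Pascal-style induction you sketch cannot be carried out as stated. (One can check that for $n=3$ the final answer still comes out to $\tfrac14,\tfrac12,\tfrac14$, but only because the ``wrong'' conditional $3/4$ is multiplied by the sampling weight $1/3$ rather than the would-be invariant weight; the invariance itself fails.)

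The fix is exactly the alternative you mention in passing: the colouring reduction. For a threshold $i$, colour $1,\ldots,i$ red; while both colours survive, the leftmost id is red and the rightmost is blue, so each step increments the red count with probability $R/n$ and decrements it with probability $(n-R)/n$. This birth--death chain has $\prod_{j=1}^{k} q_j/p_j=\binom{n-1}{k}$, hence absorption probability $\sum_{k=0}^{i-1}\binom{n-1}{k}/2^{n-1}$ at $R=n$, which after differencing yields the claimed $\binom{n-1}{i-1}/2^{n-1}$. This is precisely the paper's (one-line) argument, and it sidesteps multiplicities entirely because the red/blue count is the only state that matters.
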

\begin{proof}
We use the same coloring technique as that of the proof for Lemma \ref{lem:prob_winner}
\end{proof}
\begin{algorithm}[t]
\SetAlgoVlined
\KwIn{An urn with a ball for each participant $1, 2, \ldots, n$, each of whom lie on an axis}
\KwOut{A winning candidate $i$.}
\While{there is more than one label}{
    Randomly sample one of the two (left or right) extreme balls\;
    Toss out the sampled ball\;
}
\Return remaining label\;
\caption{RemoveRandomExtreme}
\label{alg:removerandom}
\end{algorithm}
\begin{algorithm}[t]
\SetAlgoVlined
\KwIn{An urn with a ball for each participant $1, 2, \ldots, n$, each of whom lie on an axis}
\KwOut{A winning candidate $i$.}
\While{there is more than one label}{
    Randomly sample one candidate\;
    Move both of the two (left and right) extreme balls to the sampled ball\;
}
\Return remaining label\;
\caption{ContractExtremes}
\label{alg:contract}
\end{algorithm}

\subsubsection{Hot-or-Not Consensus}

Consider Hot-or-Not Consensus, in which two balls are randomly chosen as candidates and one single ball is randomly chosen as the voter. The voter then votes between the two chosen balls and the two candidate balls are replaced with the winning candidate. We do a one-step comparison of these two algorithms given a continuous distribution of participants. It turns out that the naive Hot-or-Not Consensus can be thought of as a mix between Triadic Consensus and the (really bad) method of randomly picking a candidate.\footnote{This is only an intuition based on a one-step comparison and should not be interpreted as a comparison of their final approximation values.}
\begin{theorem}
Let a continuous distribution of voters be uniformly distributed between zero and one. Let $g_{\text{Hot-or-Not}}(x)$ and $g_{\text{Triadic}}(x)$ be the probability density of $x$ being the next winning candidate in Hot-or-Not and Triadic Consensus respectively. Then, $g_{\text{Triadic}} = 6x(1-x)$ and $g_{\text{Hot-or-Not}} = 3x(1-x) + \frac{1}{2}$. In particular,
\[g_{\text{Hot-or-Not}} = \frac{1}{2}g_{\text{Triadic}}+\frac{1}{2}g_{\text{Unif}}\]
where $g_{\text{Unif}}$ is the uniform distribution over the interval $[0,1]$.
\end{theorem}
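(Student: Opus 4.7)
The plan has three clean pieces, one for each of the three formulas. First, for $g_{\text{Triadic}}$, a triadic step with a continuous uniform population amounts to drawing three points $X_1,X_2,X_3$ i.i.d.\ uniform on $[0,1]$, and by Lemma \ref{lem:median_winner} the winner is the median, i.e.\ the second order statistic $X_{(2)}$. So I would just invoke the standard order-statistic density, $f_{X_{(k)}}(x)=\tfrac{n!}{(k-1)!(n-k)!}x^{k-1}(1-x)^{n-k}$, specialized to $n=3$, $k=2$, which immediately gives $g_{\text{Triadic}}(x)=6x(1-x)$.

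For $g_{\text{Hot-or-Not}}$, I would condition on the position of one candidate. Let the two candidates be $a,b$ i.i.d.\ uniform and the voter be $v$ uniform and independent. The candidate at position $x$ beats the candidate at position $b$ iff $v$ is on $x$'s side of the midpoint $(x+b)/2$; so writing $h(x)\;=\;\Pr[a\text{ wins}\mid a=x]$, one obtains $h(x)=\int_x^1\tfrac{x+b}{2}\,db+\int_0^x\!\bigl(1-\tfrac{x+b}{2}\bigr)\,db$. The density of the winner $W$ at $x$ then equals $f_a(x)\,h(x)+f_b(x)\,h(x)=2h(x)$, because the two candidates are exchangeable. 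After the straightforward quadratic integration, the answer simplifies to $g_{\text{Hot-or-Not}}(x)=3x(1-x)+\tfrac{1}{2}$.

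Finally, the decomposition claim is just a one-line algebraic check once the two densities are in hand: $\tfrac{1}{2}\cdot 6x(1-x)+\tfrac{1}{2}\cdot 1=3x(1-x)+\tfrac{1}{2}=g_{\text{Hot-or-Not}}(x)$, and $g_{\text{Unif}}\equiv 1$ on $[0,1]$ integrates to $1$, so each ingredient is indeed a probability density and the identity is a genuine mixture.

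Nothing here is really an obstacle; the only step that needs care is writing $h(x)$ correctly, which hinges on identifying the midpoint rule and splitting the integral over $b$ at $b=x$ so the indicator ``$v$ is on $x$'s side'' becomes the right piecewise function. Once that is set up, everything else is bookkeeping, and the three statements of the theorem fall out in order.
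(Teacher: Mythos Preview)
Your proposal is correct and follows essentially the same route as the paper: for $g_{\text{Triadic}}$ the paper also reduces to the density of the median of three uniforms (writing it as $3!\,f(x)F(x)(1-F(x))$ rather than citing the order-statistic formula, but this is the same computation), and for $g_{\text{Hot-or-Not}}$ the paper likewise conditions on the position of one candidate, uses the midpoint rule, splits the integral at $x$, and multiplies by $2$ for exchangeability. The final mixture identity is then an immediate algebraic check in both.
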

\begin{proof}
    For uniformly distributed participants, we have the density function $f(x) = 1$ and cumulative density function $F(x) = x$. In Triadic Consensus, $x$ wins if he is selected along with a candidate to the left and right of him. Then, we have 
\[g_{\text{Triadic}}(x) = 3!f(x)F(x)(1-F(x)) = 6x(1-x)\]
In Hot-or-Not Consensus, $x$ wins against $y$ only if the voter $z$ is closer to $x$ than $y$. Then,
\begin{align*}
    g_{\text{Hot-or-Not}}(x) &= 2f(x)\int_0^x f(y)\left(1-F\left(\frac{x+y}{2}\right)\right)\mathrm{d}y + 2f(x) \int_x^1 f(y)F\left(\frac{x+y}{2}\right)\mathrm{d}y\\
                          &= 2\left[\int_0^x \left(1-\frac{x+y}{2}\right)\mathrm{d}x + \int_x^1 \left(\frac{x+y}{2}\right)\mathrm{d}x\right] = 3x(1-x) + \frac{1}{2}
\end{align*}
\end{proof}

\subsection{Simulations in general spaces show promise}\label{sec:simulations}

Even though our approximation and communication complexity results only hold for single-peaked preferences, we believe that Triadic Consensus has strong properties for more complex spaces. To demonstrate this, we consider two classes of preferences that are induced by points in a two dimensional Euclidean space. In each of these cases, the rankings are clearly not single-peaked, but simulations show strong results for both the approximation and the communication complexity. The first example is a straightforward generalization to points that are laid out in a grid.

\begin{example}
$n = m^2$ voters are placed on the points $(0,0), (0,1), \ldots, (m-1, m-1)$ to form a $m\times m$ grid. The Condorcet winner in this scenario is at the median point $(\frac{m}{2}, \frac{m}{2})$. From the simulation results below, we can see that Triadic Consensus picks winners that are closely distributed around the winner. The average number of votes each voter casts is $\sim O(\log n)$.

\begin{table}\label{tab:grid}
\begin{center}
  \begin{tabular}{l|cccc}
& mean winner & $\sigma$ of winner & mean votes/voter & $\sigma$ of votes/voter\\
\hline
5x5 & (1.96, 1.97) & .953 & 1.966 & 0.425 \\
10x10 & (4.45, 4.5) & 1.157 & 3.405 & 0.517 \\
20x20 & (9.61, 9.47) & 1.236 & 4.618 & 0.417 \\
40x40 & (19.64, 19.53) & 1.594 & 6.056 & 0.368 \\
80x80 & (39.57, 39.69) & 1.555 & 7.293 & 0.324
\end{tabular}
\end{center}
\caption{Simulation results for 100 iterations of Triadic Consensus on a grid}
\end{table}
\end{example}
In the second case, we try to design a difficult scenario by densely populating the perimeter of a circle and adding a single point at its center, who is the Condorcet winner.
\begin{example}
$n$ voters are placed uniformly around a circle (in the plane) with radius $1$ and centered at $(0,0)$. A single voter is placed at the point $(0,0)$. The Condorcet winner in this scenario is the point $(0,0)$. Surprisingly, we find that even as the number of points increases on the perimeter, the probability of the randomized algorithm selecting the single Condorcet winner still remains non-trivial. If this probability does remain above some constant, then we can use standard probabilistic methods to show that repeating Triadic Consensus a small number of times and picking the majority winner will result in $(0,0)$ with arbitrarily high accuracy. The average number of votes each voter casts is $\sim O(\log n)$.
\begin{table}
\begin{center}
\begin{tabular}{l|cccc}
& \% times (0,0) wins & mean votes/voter & $\sigma$ of votes/voter\\
\hline
25 & 0.368 & 2.225 & 0.513 \\
100 & 0.338 & 4.189 & 0.775 \\
400 & 0.305 & 6.628 & 1.185 \\
1600 & 0.295 & 9.224 & 1.696 \\
6400 & 0.302 & 11.764 & 2.321
\end{tabular}
\end{center}
\caption{Simulation results for 1000 iterations of Triadic Consensus on a circle with a single point in the center}
\end{table}
\end{example}
\section{Triadic Consensus has a quasi-truthful Nash equilibrium for concave utilities}\label{sec:nash}

To discuss strategic behavior, we need to define the utilities for each participant. Let $U_x(y)$ denote the utility that $x$ gets from a proposal $y$. The utility $x$ derives from $y$ depends on the distance from $x$ to $y$, i.e. $U_x(y) = f_x(d(x, y))$, where $f(\cdot)$ must be decreasing in distance so that $U_x(y) > U_x(z)$ whenever $y \succ_x z$. We say that a participant $x$ has a concave utility function if $f_x(\cdot)$ is a concave function.

\begin{theorem}
    If all participants have concave utility functions, then Algs. \ref{alg:quasitruthfulnash} and \ref{alg:simpquasitruthful} are quasi-truthful Nash equilibria for Triadic Consensus when using the Remove and RepeatThenRemove mechanisms respectively.
\end{theorem}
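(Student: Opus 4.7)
The plan is to prove both quasi-truthfulness and the Nash property by reducing everything to the analysis of a single triad, then propagating by induction on the urn. Quasi-truthfulness is immediate: under Alg.~\ref{alg:quasitruthfulnash}, in any triad $\{x,y,z\}$ with truthful winner $y$ (whose peak lies between those of $x$ and $z$), the two non-winners vote truthfully and hence both vote for $y$; the winner $y$'s own vote cannot affect the outcome because $y$ already has two votes. So the triad returns $y$, matching the truthful run, and by induction the distribution of the final winner coincides with that of fully truthful play. The same holds for Alg.~\ref{alg:simpquasitruthful}, because the first-round vote already returns $y$ and so RepeatThenRemove never actually repeats on the equilibrium path.

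The central technical step is a \emph{supporter lemma}: for every triad $\{x,y,z\}$ with truthful winner $y$, at least one of $x,z$ prefers the continuation in which $y$ wins the triad (so the three sampled balls are relabeled $y$) to the continuation in which the three balls are removed. I would prove this by comparing the two post-triad processes and exploiting that each $U_p$, being a decreasing concave function of distance on a one-dimensional axis, is concave as a function of the winner's position. A coupling of the two continuations, together with a Jensen-style inequality, should yield $\mathbb{E}[U_x(W_A)] + \mathbb{E}[U_z(W_A)] \geq \mathbb{E}[U_x(W_B)] + \mathbb{E}[U_z(W_B)]$, where $W_A,W_B$ denote the final winners in the ``$y$ wins'' and ``removed'' scenarios; hence at least one of $x,z$ is a supporter. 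Along the way I would establish a \emph{monotonicity lemma} by induction on the urn size: relabeling any single ball to a label that is strictly closer (on the axis) to participant $p$ weakly increases $p$'s expected utility for the final winner. This monotonicity lemma is the main obstacle, because the continuation depends on the entire urn and on all future randomness; the argument must leverage the symmetry of triadic sampling together with concavity of each $U_p$.

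Given the supporter and monotonicity lemmas, the Nash analysis is a finite case check on a single triad, since a single-player deviation can only influence the rest of the game through the outcome of the current triad. If $y$ deviates, the triad still returns $y$ (his own vote is irrelevant), so no gain. If the supporter $x$ deviates, the only reachable alternative is a three-way tie, and by definition of ``supporter'' $x$ strictly prefers $y$ winning to the tie. If the non-supporter $z$ deviates, the only reachable alternative is $x$ winning (because $y$ votes for $x$ and $x$ votes for $y$ under the strategy); comparing the resulting urns and applying the monotonicity lemma to the swap of three $x$-labels for three $y$-labels, together with $|z-y|<|z-x|$, shows $z$ is worse off. Hence no profitable deviation exists and Alg.~\ref{alg:quasitruthfulnash} is a Nash equilibrium.

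Finally, for Alg.~\ref{alg:simpquasitruthful} with RepeatThenRemove, the plan is to reduce to the Remove analysis. On the equilibrium path the first round already returns $y$, so the second round is never invoked. To rule out deviations, observe that if any deviator causes a first-round tie, then in round two the truthful winner $y$ switches his vote, so by pigeonhole $y$'s round-two vote lands on the supporter guaranteed by the supporter lemma. The incentive analysis for Alg.~\ref{alg:quasitruthfulnash} then applies to round two essentially verbatim, which is why the simpler ``switch on a tie'' rule suffices when the mechanism permits one repeat before elimination.
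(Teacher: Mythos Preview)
Your overall architecture matches the paper's: induction on the urn, a supporter lemma (at least one of the two non-median participants prefers a $y$-win to removal), a monotonicity/coupling principle, and then a short case split on who deviates in the current triad. The treatment of quasi-truthfulness and the reduction of RepeatThenRemove to Remove are both correct and essentially what the paper does.

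The real gap is your proof sketch for the supporter lemma. You propose to obtain it from the \emph{sum} inequality $\mathbb{E}[U_x(W_A)]+\mathbb{E}[U_z(W_A)]\geq \mathbb{E}[U_x(W_B)]+\mathbb{E}[U_z(W_B)]$ via ``a coupling together with a Jensen-style inequality''. Two problems. First, this sum inequality is strictly stronger than what is needed (and than what the paper actually proves); the paper shows only that \emph{at least one} of $\Delta U_x,\Delta U_z$ is nonnegative, not that their sum is. Second, and more important, it is not clear how Jensen alone yields the sum inequality: the two continuation distributions $W_A$ and $W_B$ live on urns of \emph{different sizes}, have different means in general, and are not related by any convex-order comparison. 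Concavity of $w\mapsto U_x(w)+U_z(w)$ does not by itself let you compare $\mathbb{E}[g(W_A)]$ and $\mathbb{E}[g(W_B)]$ without a structural fact about those two laws. The paper's argument is accordingly quite different and much more concrete: it uses the closed-form binomial winning distribution (Theorem~\ref{thm:approx_final}) to compute $\Delta p(b_i)$ exactly, shows these differences change sign only on a single central interval (Lemma~\ref{lem:delta_add_tokens}), pairs positive and negative mass, and then applies concavity through slope-ratio bounds of the form $\frac{f(d_1)-f(0)}{f(d_1+d_2)-f(d_1)}\le \frac{d_1}{d_2}$ to conclude that whichever of $d_1/d_2$, $d_2/d_1$ is at most $1$ identifies the supporter. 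There is also a second reduction you do not mention: the paper first shows one may assume all remaining balls lie in $[x,z]$ by proving that pushing outside balls to the endpoints can only \emph{decrease} both $\Delta U_x$ and $\Delta U_z$ (Lemma~\ref{lem:move_outside_points}). Your monotonicity lemma is per-participant and does not give this simultaneous comparison.

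Your monotonicity lemma itself is essentially the paper's $x$-dominance coupling (Lemma~\ref{lem:coupling}); that part is fine. One caution: your sentence ``a single-player deviation can only influence the rest of the game through the outcome of the current triad'' understates the issue, since a deviator participates in many future triads. The paper handles this by coupling the \emph{entire trajectory} of the Alg.~\ref{alg:quasitruthfulnash} urn with that of an optimally-playing urn and showing $x$-dominance is preserved step by step; the inductive hypothesis is what rules out the deviator forcing a tie in the coupled urn. Your monotonicity lemma will need to be deployed in the same trajectory-wide fashion, not just at one step.
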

\begin{proof}
  We prove our main result with the following proof by induction. Since the proofs for the Remove mechanism and the RepeatThenRemove mechanism are almost identical (see \ref{sec:quasitruthful}), we will refer solely to the Remove mechanism for simplicity.

\paragraph{Base Case:} Alg. \ref{alg:quasitruthfulnash} is a Nash equilibrium for $n = 1, 2, 3$ balls (Lemma \ref{lem:basecase}).

\paragraph{Inductive Step:} Assume that Alg. \ref{alg:quasitruthfulnash} is a Nash equilibrium for $n-3$ balls. Now consider a participant $x$ who is considering deviating from Alg. \ref{alg:quasitruthfulnash} in an urn with $n$ balls:
\begin{enumerate}
    \item For any TriadicVote with participants $x < y < z$ in an urn with $n$ balls, if $y$ votes for $x$, then by the definition of the strategy and the fact that one of $x$ and $z$ must prefer $y$ to a three-way tie (Lemma \ref{lem:existence}), we know that $x$ must prefer $y$ to win over a three-way tie, which means $x$ should not deviate.
    \item For any TriadicVote with participants $x < y < z$ in an urn with $n$ balls, if $y$ votes for $z$, then given that the previous statement is true, we show that $x$ should prefer a win for $y$ over a win for $z$ (Lemma \ref{lem:coupling}). This is done by defining a comparison relation between urns that formalizes this intuition that participants should prefer closer balls. With this definition, we can define a coupling of two urns: one in which $x$ plays an optimal strategy, and one in which $x$ always plays according to Alg. \ref{alg:quasitruthfulnash}. We show that for every coupled history, the urn from Alg. \ref{alg:quasitruthfulnash} does at least as well as the optimal urn in expected utility. This means that Alg. \ref{alg:quasitruthfulnash} is also an optimal strategy for $x$ in this case.
\end{enumerate}
By carrying out the Inductive Hypothesis, we get our result for all $n$.\qed
\end{proof}

\subsection{Supporting Lemmas}

\begin{lemma}\label{lem:basecase}
    Alg. \ref{alg:quasitruthfulnash} is a Nash equilibrium for Triadic Consensus with the Remove mechanism when $n = 1$, $2$, or $3$ balls.
\end{lemma}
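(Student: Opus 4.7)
The plan is to split the argument by the value of $n$: the cases $n = 1, 2$ will be essentially trivial because no genuine TriadicVote ever fires, so the real work is in $n = 3$, where the only interesting configuration is three balls carrying three distinct participant ids.

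For $n = 1$ the urn's unique label is returned immediately with no vote and no choice for anyone to make, so the strategy is vacuously a Nash equilibrium. For $n = 2$ any triple drawn with replacement from the urn contains two balls sharing an id, and TriadicVote returns the majority id via the opening check of Alg.\ \ref{alg:Triad} without any comparison ever being cast; again no participant has an action to take and nothing to deviate to.

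For $n = 3$ I will first observe that if the three balls carry at most two distinct labels the $n = 2$ argument already applies (no sampled triple can have three distinct ids), so I may assume the urn holds three distinct participants whose peaks I order as $x < y < z$. I will then compute the outcome under Alg.\ \ref{alg:quasitruthfulnash}: the non-median players $x$ and $z$ do not think they will win and therefore vote truthfully, and since $y$ is the closer of the two remaining candidates to each of them, both vote for $y$; meanwhile $y$ thinks he should win and applies Lemma \ref{lem:existence} to pick an opponent---WLOG $x$---who prefers a win for $y$ over a three-way tie, and votes for him. The tally will then be two votes for $y$, one for $x$, and zero for $z$, so $y$ wins, all three balls are relabeled $y$, and the algorithm terminates.

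Finally I will verify that no single-player unilateral deviation is profitable. A swap by $y$ leaves $x$ and $z$ still voting for $y$, so $y$ still wins with two votes; no change in outcome. A swap by $x$ (voting for $z$ instead of $y$) produces a three-way tie, triggering the Remove mechanism and returning a uniform sample of $\{x, y, z\}$; this is precisely the inequality $U_x(y) \ge \frac{1}{3}\bigl(U_x(x) + U_x(y) + U_x(z)\bigr)$ supplied by Lemma \ref{lem:existence}, which rules the deviation out. A swap by $z$ (voting for $x$ instead of $y$) tilts the tally to two votes for $x$ and one for $y$, so $x$ wins; since truthfulness gives $U_z(y) > U_z(x)$, this is strictly worse. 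The only real obstacle will be careful bookkeeping of the combined effect of each deviation against the two other prescribed strategies---in particular making sure I have captured $y$'s dual role as both the chooser of his own vote and a passive recipient of the other two truthful votes---and then invoking Lemma \ref{lem:existence} with the correct sign to close the $x$-deviation case.
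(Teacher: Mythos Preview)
Your overall structure---trivial for $n=1,2$, reduce $n=3$ to the single TriadicVote with three distinct ids, compute the prescribed votes, and check the three unilateral deviations---matches the paper exactly. The deviation analysis for $y$ and for $z$ is also the same as the paper's.

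The genuine gap is in how you dispatch the $x$-deviation case. You write that the required inequality $U_x(y)\ge\tfrac13\bigl(U_x(x)+U_x(y)+U_x(z)\bigr)$ is ``supplied by Lemma~\ref{lem:existence}''. But Lemma~\ref{lem:basecase} is the \emph{base case} of the induction that the main theorem sets up, and Lemma~\ref{lem:existence} is part of the inductive step; invoking the latter to establish the former short-circuits the argument. More concretely, even if you try to instantiate Lemma~\ref{lem:existence} at $n=3$ (using the vacuous $n-3=0$ hypothesis), its proof does not go through there: it compares winning probabilities in the post-win urn $P$ (with $n$ balls) and the post-tie urn $Q$ (with $n-3$ balls) via the binomial formula of Theorem~\ref{thm:approx_final}. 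At $n=3$ the urn $Q$ is empty and the algorithm does not run Triadic Consensus on it at all---it returns a uniform sample from the last-removed triple $\{x,y,z\}$. That tie-break rule is outside the scope of Lemma~\ref{lem:existence}'s proof, so you cannot cite it.

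The paper's proof fills exactly this hole with a self-contained concavity computation: writing $d_1=d(x,y)$, $d_2=d(y,z)$, it shows
\[
\Delta U_x\ge 0\iff \frac{f_x(d_1)-f_x(0)}{f_x(d_1+d_2)-f_x(d_1)}\le 1,\qquad
\Delta U_z\ge 0\iff \frac{f_z(d_2)-f_z(0)}{f_z(d_1+d_2)-f_z(d_2)}\le 1,
\]
and uses concavity of $f_x,f_z$ to bound these ratios by $d_1/d_2$ and $d_2/d_1$ respectively, at least one of which is $\le 1$. That direct argument is the actual content of the base case; your proposal needs to replace the appeal to Lemma~\ref{lem:existence} with it (or an equivalent direct calculation).
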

\begin{proof}
    This is trivially true for $n = 1$ and $2$ since no votes take place. For $n=3$, suppose that the three participants are $x < y < z$. In this case, the only situation when participants cast votes is when TriadicVote is performed with all three unique participants. After such a situation occurs, there will either be a winner or all balls will be eliminated and no further votes take place. Therefore, our analysis can be constrained to this single TriadicVote.

    If participants vote according to Alg. \ref{alg:quasitruthfulnash}, we know that $y$ will be the winner since $x$ and $z$ both vote for him. Suppose $y$ votes for $x$ WLOG. Then if $z$ deviates, $x$ will win, which is clearly suboptimal. If $x$ deviates, then there is a three-way tie and all are eliminated, resulting in a uniformly random winner. 

    The difference in utility lost for $x$ by deviating is $\Delta U_x = U_x(y) - \frac{1}{3}(U_x(x) + U_x(y) + U_x(z))$. Letting $d_1$ be the distance between $x$ and $y$ and $d_2$ the distance between $y$ and $z$, we have $\Delta U_x = \frac{1}{3}(f_x(d_1) - f_x(0)) - \frac{1}{3}(f_x(d_1+d_2) - f_x(d_1))$ and 
    \begin{align*}
        \Delta U_x \geq 0 &\iff f_x(d_1) - f_x(0) \geq f_x(d_1+d_2) - f_x(d_1)\\
                          &\iff \frac{f_x(d_1) - f_x(0)}{f_x(d_1+d_2) - f_x(d_1)} \leq 1
    \end{align*}
    Similarly, 
    \[\Delta U_z \geq 0 \iff \frac{f_z(d_2) - f_z(0)}{f_z(d_1+d_2) - f_z(d_2)} \leq 1\]
    For concave, monotonically non-increasing $f_x$ and $f_z$, we know that (detailed in the long version\cite{Goel2012}):
    \[\frac{f_x(d_1) - f_x(0)}{f_x(d_1+d_2) - f_x(d_1)} \leq \frac{d_1}{d_2} \hspace{.2in} \text{ and } \hspace{0.2in} \frac{f_z(d_2) - f_z(0)}{f_z(d_1+d_2) - f_z(d_2)} \leq \frac{d_2}{d_1}\]
    But then, at least one of $\frac{d_1}{d_2}$ or $\frac{d_2}{d_1}$ is less than or equal to $1$, which means that at least one of $\Delta U_x$ and $\Delta U_z$ is greater than or equal to $0$ and prefers a win for $y$ over a three-way tie. By the definition of Alg. \ref{alg:quasitruthfulnash}, $y$ will vote for this person when he exists. Therefore, since $y$ voted for $x$, we know $\Delta U_x \geq 0$, which concludes the proof.\qed
\end{proof}
 
\begin{lemma}\label{lem:existence}
    Assume that Alg. \ref{alg:quasitruthfulnash} is a Nash equilibrium for any configuration of $n-3$ balls. Then for a TriadicVote among participants $x < y < z$ in an urn with $n$ balls, at least one of $x$ or $z$ prefers a win for $y$ over a three-way tie, so long as they both have concave utilities.
\end{lemma}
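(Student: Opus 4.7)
The plan is to reduce the $n$-ball case to a base-case style concavity inequality by coupling the two continuations. Let $W_A$ denote the eventual winner when $y$ wins this TriadicVote (the three sampled balls are relabeled $y$ and the algorithm continues on $n$ balls with only $n-2$ distinct labels), and let $W_B$ denote the eventual winner when all three balls are removed (the algorithm continues on $n-3$ balls). Invoking the inductive hypothesis on both continuations (each has strictly fewer distinct labels than the original urn), we may assume quasi-truthful play throughout; in particular, any downstream TriadicVote on three distinct participants $a < b < c$ elects the median $b$ by Lemma \ref{lem:median_winner}. The goal is to show $\mathbb{E}[U_x(W_A)] \geq \mathbb{E}[U_x(W_B)]$ or $\mathbb{E}[U_z(W_A)] \geq \mathbb{E}[U_z(W_B)]$.

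To compare $W_A$ with $W_B$, I would build a coupling that uses the same sampling stream for the $n-3$ ``other'' balls in both runs. In Scenario A, whenever a sample lands on one of the three extra $y$-copies, $y$ participates in that TriadicVote, and by quasi-truthfulness the triple's median on the axis wins. Each such ``interception'' pulls the surviving labels toward $y$: any label that would have drifted far from $y$ in Scenario B is, in Scenario A, either collapsed to $y$ or replaced by a closer label. Consequently $W_A$ is, in an axis-wise stochastic sense, no farther from $y$ than $W_B$.

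Finally, I would combine this pull with concavity, mirroring the base case. Let $d_1 = d(x,y)$ and $d_2 = d(y,z)$, and assume WLOG $d_1 \leq d_2$. Each $y$-interception trades distribution mass lying past $y$ (at distance $\geq d_1$ from $x$) for mass at $y$ itself (distance $d_1$); because $f_x$ is concave and non-increasing, the resulting expected-utility bookkeeping reduces to the base-case ratio inequality $\frac{f_x(d_1) - f_x(0)}{f_x(d_1 + d_2) - f_x(d_1)} \leq \frac{d_1}{d_2} \leq 1$, which certifies $\mathbb{E}[U_x(W_A)] \geq \mathbb{E}[U_x(W_B)]$. Symmetrically, $z$ prefers the win whenever $d_2 \leq d_1$, so at least one of $x$ or $z$ always prefers $y$'s win over a three-way tie. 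The main obstacle will be making the ``pull toward $y$'' rigorous: one must verify that each coupled $y$-interception induces a one-step axis-dominance relation between the two urns and that this dominance lifts to the terminal winner distributions. If the direct coupling proves unwieldy, a cleaner alternative is to induct on the number of extra $y$-copies, adding them one at a time and applying the inductive hypothesis at each intermediate urn.
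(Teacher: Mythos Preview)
Your coupling idea is appealing but has two structural gaps that the paper's argument avoids.

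First, the two continuations have different urn sizes---$n$ balls in Scenario~A versus $n-3$ in Scenario~B---so ``use the same sampling stream for the $n-3$ other balls'' is not a well-defined coupling: each step draws three balls uniformly from the \emph{current} urn, and the marginals on the $n-3$ common balls differ between the two runs. You would have to construct an explicit joint law on the two sample paths, and it is far from clear that one exists with the axis-dominance property you need. (Relatedly, you invoke the inductive hypothesis on Scenario~A, but that urn still has $n$ balls; the hypothesis is only available at $n-3$.)

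Second, and more seriously, ``$W_A$ is pulled toward $y$'' does \emph{not} imply that $x$ prefers $W_A$. If Scenario~B places nontrivial mass on positions strictly between $x$ and $y$, shifting that mass toward $y$ moves it \emph{away} from $x$. Your final paragraph accounts only for mass ``past $y$'' and collapses everything to the three-point ratio $\tfrac{f_x(d_1)-f_x(0)}{f_x(d_1+d_2)-f_x(d_1)}$, but the true $\Delta U_x$ is a sum over all $n$ ball positions, with both favorable and unfavorable shifts. The base-case inequality does not absorb these competing contributions.

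The paper handles exactly this two-sided bookkeeping, and it does so by abandoning coupling entirely. It computes the winning distributions in the two urns $P$ and $Q$ explicitly from Theorem~\ref{thm:approx_final} (both are binomial), so $\Delta p(b_i)$ has a closed form. A sign lemma (Lemma~\ref{lem:delta_add_tokens}) shows $\Delta p(b_i)>0$ on a central interval between $y$ and the median ball and $\Delta p(b_i)<0$ outside it. The difference $\Delta U_x=\sum_i \Delta p(b_i)U_x(b_i)$ is then partitioned into a near-$x$ negative block and a far-from-$x$ negative block against the central positive block, and a generalized concavity ratio inequality (Lemma~\ref{lem:concave_ineq}) plays the role your base-case ratio plays---but for arbitrary weighted collections of points, not just three. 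A separate reduction (Part~B) first moves all balls outside $[x,z]$ onto the endpoints, which only decreases both $\Delta U_x$ and $\Delta U_z$, so it suffices to treat the case where every ball lies in $[x,z]$. The essential difference from your proposal is that the paper tracks mass on both sides of $y$ explicitly and shows the two sides trade off so that at least one of $x,z$ comes out ahead; your sketch implicitly assumes the trade-off always favors the closer extreme, which is false without that full accounting.
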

\begin{proof}
    Because of space constraints, we will only outline the proof here, leaving the notation and algebra for Appendix \ref{appsec:existence}. The proof has two parts:

    Part A. Suppose all balls are positioned somewhere between $x$ and $z$, i.e. in the interval $[x, z]$. Then, if $x$ and $z$ have concave utility functions, at least one of $x$ and $z$ prefers a win for $y$ over a three-way tie. The proof for this statement is similar to the one in Lemma \ref{lem:basecase}, albeit more complex. 

    Part B. For any configuration of $n$ balls, moving any ball at position $x$ leftwards and moving any ball at position $z$ rightwards can only increase both $\Delta U_x$ and $\Delta U_z$. Put another way, given any configuration, we can move all balls left of $x$ to $x$ and all balls right of $z$ to $z$, while only decreasing $\Delta U_x$ and $\Delta U_z$. Once moved in this way, the configuration of balls falls under the jurisdiction of Part 1, which states that at least one of $\Delta U_x$ and $\Delta U_z$ is greater than or equal to $0$. Therefore, the same participant in the original configuration must also have a positive $\Delta U$, which means he prefers a win for $y$ over a three-way tie.\qed
\end{proof}
    
For the final lemma, we require the following definition.
\begin{definition}\label{def:urn_domin}
    Given two urns $R$ and $S$, each with $n$ balls, number the balls in $R$ from left to right as $r_1, r_2, \ldots, r_n$ and number the balls in $S$ from left to right as $l_1, l_2, \ldots, l_n$. Then $R$ $x\text{-dominates}$ $S$ if
\begin{align*}
    s_i \leq r_i &\text{ for } r_i < x\\ 
    s_i = r_i    &\text{ for } r_i = x\\
    s_i \geq r_i &\text{ for } r_i > x
\end{align*}
\end{definition}
\begin{lemma}\label{lem:coupling}
    Assume that Alg. \ref{alg:quasitruthfulnash} is a Nash equilibrium for any configuration of $n-3$ balls. Then for a TriadicVote among participants $x < y < z$ in an urn with $n$ balls, if $y$ votes for $z$ (WLOG), $x$ does not benefit by voting strategically for $z$.
\end{lemma}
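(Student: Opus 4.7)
Define $R$ to be the urn state immediately after the TriadicVote in the scenario where $x$ plays Alg.\ \ref{alg:quasitruthfulnash} (so $x$ votes for $y$, giving $y$ two votes and the win), and $S$ to be the corresponding state when $x$ deviates to vote for $z$ (making $z$ the winner). These urns each have $n$ balls and coincide on all but the three sampled balls: in $R$ these are relabeled to $y$, while in $S$ they are relabeled to $z$. Since $x < y < z$, the three affected balls sit closer to $x$ in $R$ than in $S$, and a straightforward rank-by-rank check confirms that $R$ $x$-dominates $S$ per Def.\ \ref{def:urn_domin}. The plan is to show that $R$'s continuation yields $x$ at least as much expected utility as $S$'s, via a coupling that samples the same three ranks from both urns at every subsequent step.

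By the inductive hypothesis (Alg.\ \ref{alg:quasitruthfulnash} is a Nash equilibrium for smaller urn states) together with the earlier part of the current inductive step, all participants may be taken to play Alg.\ \ref{alg:quasitruthfulnash} in the coupled continuation, so by Lemma \ref{lem:median_winner} the median of each TriadicVote wins. Thus, when ranks $(i_1, i_2, i_3)$ are sampled, the three chosen balls relabel to $r_{i_2}$ in $R$ and to $s_{i_2}$ in $S$; the $x$-dominance invariant guarantees that these two medians lie on the same side of $x$ with $r_{i_2}$ at least as close. The central technical step is to show that $x$-dominance is preserved by this coupled update. Because the rule ``all three sampled balls take the median position'' is position-blind, the post-update sorted orders should satisfy $r_j' = r_{\pi(j)}$ and $s_j' = s_{\pi(j)}$ for a common rank-shift map $\pi$ that depends only on the sampled ranks $(i_1,i_2,i_3)$ and sends the three ``new median'' ranks to $i_2$. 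Consequently, $x$-dominance at post-update rank $j$ reduces to $x$-dominance at pre-update rank $\pi(j)$, which holds by assumption.

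Iterating, the invariant persists until each urn consolidates to a single winner $w_R$ (resp.\ $w_S$); by the final $x$-dominance, $w_R$ is at least as close to $x$ as $w_S$, so $U_x(w_R) \geq U_x(w_S)$ and $x$ gains no utility by deviating. I expect the main obstacle to be the case analysis establishing the common rank-shift $\pi$, especially when several positions coincide (requiring a consistent tie-breaking convention between the two urns) or when sampling with replacement selects the same ball more than once so that only one or two balls actually change position. A secondary subtlety is that $x$ might keep deviating in $S$'s continuation rather than playing Alg.\ there; this is handled by first showing that Alg.\ in $R$ weakly dominates Alg.\ in $S$ via the coupling above, and then noting that, by the inductive hypothesis, Alg.\ in $S$'s continuation is itself at least as good for $x$ as any alternative strategy.
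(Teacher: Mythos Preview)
Your approach is valid but genuinely different from the paper's. The paper couples two \emph{identical} starting urns $R$ and $S$ and lets $x$ play Alg.~\ref{alg:quasitruthfulnash} in $R$ versus an arbitrary optimal strategy $\text{OPT}$ in $S$ throughout the entire process; the invariant ``$R$ $x$-dominates $S$'' is then maintained step by step, and the delicate case is when $x$ is a side participant in $S$ and the median votes for $x$---there the paper must invoke Lemma~\ref{lem:existence} to argue that $\text{OPT}$ would never cause a tie (otherwise the two urns would have different sizes and the rank-coupling would break). You instead take $R$ and $S$ to be the two \emph{post-vote} states ($y$-win versus $z$-win) and run both continuations under Alg.~\ref{alg:quasitruthfulnash} for everyone. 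This buys you a cleaner invariant step: since no one deviates, the median always wins, and your common rank-shift map $\pi$ (which indeed depends only on the sampled ranks $i_1<i_2<i_3$, sending $j\mapsto j$ outside $[i_1,i_3]$, shifting by $\pm 1$ on the flanks, and collapsing three ranks onto $i_2$) suffices without any appeal to Lemma~\ref{lem:existence}. The price is that you must check $R$ $x$-dominates $S$ at the outset rather than getting it for free from $R=S$.

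One genuine error: your ``secondary subtlety'' paragraph is both unnecessary and mishandled. The lemma asks only whether the single deviation (voting for $z$ now) is profitable, with the continuation fixed to Alg.~\ref{alg:quasitruthfulnash}; your coupling already answers exactly that question. If you wanted to additionally rule out multi-step deviations in $S$, you cannot do so by invoking the inductive hypothesis, since $S$ still has $n$ balls, not $n-3$. Either drop the paragraph, or---if you insist on comparing to full $\text{OPT}$---you are forced back to something like the paper's argument, where the possibility of $x$ forcing a tie in $S$ must be excluded via Lemma~\ref{lem:existence} at every step.
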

\begin{proof}
Our proof strategy will be to use a coupling argument. Let $\text{OPT}$ denote the optimal strategy for $x$. We consider two urns $R$ and $S$. In urn $R$, $x$ plays according to Alg. \ref{alg:quasitruthfulnash}. In urn $S$, $x$ plays according to $\text{OPT}$, the strategy that maximizes his expected utility. We couple the TriadicVote's of these urns in the following way:
\begin{enumerate}
    \item Let $r_1, r_2, \ldots, r_n$ denote the balls in urn $R$ as indexed from leftmost position to rightmost position. Let $s_1, s_2, \ldots, s_n$ denote the balls in urn $S$ as indexed from leftmost position to rightmost position.
    \item Then for every TriadicVote, when balls $r_i, r_j, r_k$ are randomly drawn from urn $R$, balls $s_i, s_j, s_k$ will be drawn from urn $S$.
\end{enumerate}
Suppose $R$ $x\text{-dominates}$ $S$ and then each undergoes a coupled TriadicVote where balls $r_i < r_j < r_k$ are selected from $R$ and $s_i < s_j < s_k$ are selected from $S$. After they vote, we show that the resulting urns $R'$ and $S'$ must still satisfy $R'$ $x\text{-dominates}$ $S'$. By the coupling rule, this is trivially true when 1) $x$ is not selected, 2) $x$ is represented in two or more balls, and 3) $x$ is the middle participant. This is because $x$ either does not vote or cannot affect the result in these cases (remember that all other participants are voting according to Alg. \ref{alg:quasitruthfulnash}). The only remaining case is when $x$ is one of the side participants ($s_i$ WLOG). In this case, $r_j$ wins in urn $R$ since $x$ plays according to Alg. \ref{alg:quasitruthfulnash} in this urn. Suppose $s_j$ voted for $s_k$. Then regardless of who $x$ votes for, one of $s_j$ or $s_k$ must win, both of which will still satisfy $R'$ $x\text{-dominates}$ $S'$. Now suppose $s_j$ voted for $s_i$. Then $x$ could eliminate all three participants by voting for $s_k$. However, by Lemma \ref{lem:existence} and the definition of Alg. \ref{alg:quasitruthfulnash}, this would be suboptimal, which means that $x$ cannot play this strategy in urn $S$. Therefore, $R'$ $x\text{-dominates}$ $S'$.

Finally, we note that before any TriadicVote's take place, $R$ and $S$ are identical, i.e. $R$ $x\text{-dominates}$ $S$. Then, the winner of $R$ must also $x\text{-dominate}$ the winner of $S$, which means that urn $R$ is better for $x$ in every coupled history.\qed
\end{proof}

\section{Future Directions}\label{sec:future}

There are many future directions for this work. 

\paragraph{Triadic Consensus} For the algorithm itself, the primary problem that begs to be worked on is an analysis for higher dimensional or even non-Euclidean spaces. It is an open question whether Triadic Consensus achieves low communication complexity for general preference profiles and whether it has nice properties of convergence. The authors believe that there is something interesting that can be said here. Similarly, it would also be exciting to extend the work on quasi-truthfulness to higher dimensional spaces. The authors do not believe that a naive extension will suffice; however, it seems possible that probabilistic strategies coupled with other punishments for manipulation will be able to achieve this goal.

\paragraph{Truthful voting rules} When participants are voters and candidates, we have indicated that manipulation can often be detected. It would be interesting to use this idea, possibly along with the theme of triads, quasi-truthfulness, and cyclic preferences, to design truthful voting rules. For example, one could imagine the following variant of the Borda count: for each of the $\binom{n}{3}$ triads, add one point to the score of the winner\footnote{The Borda count is equivalent to giving the winner two points, the next highest scoring participant one point, and the loser zero points.}. 

\paragraph{Communication complexity} Another exciting problem is to make new approximate and randomized voting rules that have low communication complexity. In particular, it would be useful to have a voting rule where the maximum number of comparisons per voter is small (say, $O(\log n)$). In Triadic Consensus, only the average number of comparisons is small, which may still prevent it from being widely applicable to large internet crowdsourcing applications. 

\paragraph{Consensus mechanisms} On the direction of group consensus mechanisms, one possible extension of this work is to bring it outside of voting. Namely, rather than having the randomly selected triads {\it vote}, it would be interesting to analyze other sorts of dynamics that are more collaborative or game-theoretic.


\paragraph{Urn voting rules} It would also be interesting to study generalized urn voting rules. This could include different ball replacement schemes or even more elaborate generalizations. For example, balls could be labeled with participant {\it and} proposal ids so that only proposal ids are changed after a TriadicVote. Such urn voting rules are interesting because they can be interpreted as local decisions made by small groups of people and also provide a natural framework for studying (non-trivial) probabilistic voting rules.

\section*{Acknowledgements}
    The authors would like to thank Pranav Dandekar for helpful discussions and Vincent Conitzer for pointing us to several pieces of related literature and posing the idea of lowering communication complexity with approximations. This research was supported in part by NSF grants 0904325 and 0947670. David Lee was also supported in part by an NSF Graduate Research Fellowship.

\bibliographystyle{splncs}
\bibliography{TriadicArXiv}

\appendix

\section{Proof of Lemma \ref{lem:existence}}\label{appsec:existence}
We restate Lemma \ref{lem:existence} here for the convenience of the reader.
\begin{lemma}
    Assume that Alg. \ref{alg:quasitruthfulnash} is a Nash equilibrium for any configuration of $n-3$ balls. Then for a TriadicVote among participants $x < y < z$ in an urn with $n$ balls, at least one of $x$ or $z$ prefers a win for $y$ over a three-way tie, so long as they both have concave utilities.
\end{lemma}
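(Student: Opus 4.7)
The plan is to follow the two-part decomposition indicated in the author's sketch. Write $W_a$ for the random final winner of Triadic Consensus if $y$ wins the current TriadicVote (the three sampled balls are relabeled $y$) and $W_b$ for the final winner if the three-way tie occurs (the three balls are removed). By the inductive hypothesis on $n-3$ balls the tie continuation is quasi-truthful, and the win continuation is played quasi-truthfully by parallel assumption. Set $\Delta U_x := \mathbb{E}[f_x(|x-W_a|)] - \mathbb{E}[f_x(|x-W_b|)]$ and $\Delta U_z$ analogously, so the goal reduces to $\max(\Delta U_x, \Delta U_z) \geq 0$.

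For Part A (all balls in $[x,z]$), both $W_a$ and $W_b$ are supported in $[x,z]$, and the $a$-urn differs from the $b$-urn only by three extra balls at $y$. The technical ingredient I would lean on is that $w \mapsto f_x(|x-w|)$ is concave on $\mathbb{R}$: the inner map $|x-\cdot|$ is convex and $f_x$ is concave and non-increasing, so a short chaining of the defining inequalities shows that composing a concave non-increasing function with a convex function yields a concave function. Combining this concavity via Jensen's inequality with a stochastic-concentration statement that the extra $y$-balls pull $W_a$'s mass toward $y$ relative to $W_b$ (provable using Lemma \ref{lem:prob_winner} and the binomial closed form of Theorem \ref{thm:urn}, which is monotone in the initial red count), I would do a case split on the sign of $\mathbb{E}[W_b]-y$: if $\mathbb{E}[W_b] \leq y$, the concentration toward $y$ pulls the mean closer to $z$ and yields $\Delta U_z \geq 0$; the symmetric case delivers $\Delta U_x \geq 0$.

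For Part B (general configurations), the claim is that moving any ball at position $p<x$ rightward to $x$ (or any ball at $p>z$ leftward to $z$) can only weakly decrease both $\Delta U_x$ and $\Delta U_z$. Intuitively, balls outside $[x,z]$ are far from both $x$ and $z$, so when they eventually win the tie continuation both participants receive low utility; pulling them to the boundary raises tie-continuation utility and hence lowers $\Delta U$. After compressing all outer balls inward, the configuration is covered by Part A, which guarantees $\max(\Delta U_x^{\mathrm{comp}}, \Delta U_z^{\mathrm{comp}}) \geq 0$; since each original delta dominates its compressed counterpart, the corresponding original $\Delta U$ is also $\geq 0$. To make this monotonicity rigorous, I would couple the original and compressed urns by indexing balls left-to-right and propagate positional dominance through each step of the quasi-truthful TriadicVote dynamics.

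The main obstacle is Part B's coupling. We cannot invoke Lemma \ref{lem:coupling} because it appears later and its own proof depends on the present lemma, so a bespoke coupling must be developed from scratch in the spirit of Definition \ref{def:urn_domin} but specialized to the compression move, with the preservation of positional dominance verified case by case across the four possible TriadicVote outcomes. A secondary subtlety in Part A is converting the ``concentration toward $y$'' intuition into the expectation-level comparison used in the case split rather than just a statement about pointwise probabilities; the cleanest route I see is to invoke the fixed-size urn reduction of Lemma \ref{lem:prob_winner} and exploit monotonicity of the Theorem \ref{thm:urn} closed form to compare the two winner distributions directly, which reduces the step to a binomial-identity inequality relating urns of sizes $n$ and $n-3$ with red counts $R_0+3$ and $R_0$.
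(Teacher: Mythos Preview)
Your two-part decomposition matches the paper's, but both parts diverge in technique, and each has a genuine gap.

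\textbf{Part A.} The paper does not use Jensen or a case split on $\mathbb{E}[W_b]-y$. Instead it works directly with the explicit binomial winning probabilities from Theorem~\ref{thm:approx_final}: it defines $\Delta p(b_i)$ as the change in ball $b_i$'s winning probability between the $y$-win urn and the tie urn, proves (Lemma~\ref{lem:delta_add_tokens}) that $\Delta p(b_i)>0$ precisely on a contiguous index interval around the median and the three $y$-balls, then partitions the positive mass against the negative mass in a one-to-one pairing. Concavity enters only through a ratio inequality (Lemma~\ref{lem:concave_ineq}) that sandwiches both $\Delta U_x$ and $\Delta U_z$ against the \emph{same} ratio of distance sums, which is trivially either $\ge 1$ or $\le 1$. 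Your Jensen route has a real obstacle: on $[x,z]$ the map $w\mapsto f_z(z-w)$ is increasing concave and $w\mapsto f_x(w-x)$ is decreasing concave, so showing $\Delta U_z\ge 0$ requires $W_a$ to dominate $W_b$ in the increasing-concave order, while $\Delta U_x\ge 0$ requires domination in the decreasing-concave order. A mean comparison $\mathbb{E}[W_b]\lessgtr y$ alone does not deliver either of these; you would need a mean-plus-dispersion statement, and the distributions here do not have equal means, so ``concentration toward $y$'' is not the right ordering. The binomial-identity reduction you allude to at the end is closer in spirit to what the paper actually does, but as written it is not the argument you propose.

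\textbf{Part B.} Your intuition that compression ``raises tie-continuation utility and hence lowers $\Delta U$'' is incomplete: compressing an outer ball toward the boundary raises its utility in \emph{both} the win continuation and the tie continuation, so the net effect on $\Delta U_x$ has the sign of $\Delta p(b_i)$, not a fixed sign. The paper exploits exactly this: since compression preserves the left-to-right ordering of balls, $\Delta p(b_i)$ is invariant under the move, and the monotonicity of $\Delta U_x$ and $\Delta U_z$ then follows from a direct sign analysis using the interval structure of Lemma~\ref{lem:delta_add_tokens} (with a two-step intermediate urn when the median lies outside $[x,z]$). No coupling of trajectories is needed, and you are right that you cannot borrow Lemma~\ref{lem:coupling} here; but building a bespoke coupling is substantially harder than the paper's static argument and is not the intended route.
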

As described in Section \ref{sec:nash}, this proof will be broken up into two parts, which we will prove below. Each of these lemmas assume the inductive hypothesis that Alg. \ref{alg:quasitruthfulnash} is a Nash equilibrium for any configuration of $n-3$ balls.

\subsection{Preliminaries}
Let $P$ denote the urn resulting from a win for $y$ and $Q$ denote the urn resulting from a three-way tie. $P$ has $n$ balls, while $Q$ has $n-3$ balls. 

Index the balls in urn $P$ as $b_1, b_2, \ldots, b_n$ from the leftmost participant position to the rightmost and let the balls in urn $Q$ be indexed identically. Let $b_l, b_{l+1}, b_{l+2}$ denote the three balls in urn $P$ that are not in urn $Q$, so that urn $Q$ has balls $b_1, \ldots, b_{l-1}, b_{l+3}, \ldots, b_n$. Note that these three balls can be indexed consecutively because all of these missing balls are at the same location (participant $y$).

For each ball $b_i$, there is some probability that the ball wins in urn $P$ and some probability that the ball wins in urn $Q$.\footnote{Technically, it is not clear what it means for one ball in position $p$ to win over another ball in position $p$; we let the winning probability of the $i$-th ball of $n$ total balls be $\propto \binom{n-1}{i-1}$ since this is convenient and still produces the correct participant winning probabilities.} Since $b_l, b_{l+1}, b_{l+2}$ don't exist in urn $Q$, their probability of winning there is simply $0$.

Let $\Delta p(b_i) = \prob[b_i \text{ wins in urn $P$}] - \prob[b_i \text{ wins in urn $Q$}]$. 

We will use $U_x(b_i)$ to denote the utility of a $b_i$ win for $x$. As before, $U_x(b_i) = f_x(d(x, b_i))$. If $f_x$ is concave, then $x$ is said to have a concave utility function. Let $U_x^P$ denote the expected utility for $x$ from quasi-truthful voting in urn $P$. Finally, we use $\Delta U_x = U_x^P - U_x^Q$ to denote the difference in expected utility in quasi-truthful voting for urn $P$ and compared to urn $Q$. 

With this notation, our proof essentially boils down to proving that at least one of $\Delta U_x \geq 0$ or $\Delta U_z \geq 0$ holds. We start with a Lemma which we will need for the further arguments. Roughly speaking, it states that there is an interval from a ball on participant $y$ to the median ball for which the probability of winning in $P$ is greater than $Q$. All balls outside this interval have a lower probability of winning in $P$ than in $Q$.

\begin{lemma}\label{lem:delta_add_tokens}
    Suppose Triadic Consensus is run on the urns $P$ and $Q$. Then for quasi-truthful voting,
    \begin{align*}
        \Delta p(b_i) > 0 &\text{ if $\min(l, n/2) \leq i \leq \max(l+2, n/2)$}\\
        \Delta p(b_i) < 0 &\text{ otherwise}
    \end{align*}
\end{lemma}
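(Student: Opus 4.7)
The plan is to apply Theorem \ref{thm:approx_final} directly to the urns $P$ (of size $n$) and $Q$ (of size $n-3$), and then do a short case analysis on the position of $i$ relative to $\{l, l+1, l+2\}$. Theorem \ref{thm:approx_final} immediately yields $\Pr[b_i \text{ wins in } P] = \binom{n-1}{i-1}/2^{n-1}$, while in $Q$ we must reindex: for $i < l$ the ball $b_i$ sits at position $i$ of $Q$; for $i > l+2$ it sits at position $i-3$; and for $i \in \{l, l+1, l+2\}$ it is absent.

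The trivial case is $i \in \{l, l+1, l+2\}$, where $\Pr[b_i \text{ wins in } Q] = 0$ and the difference $\Delta p(b_i) = \binom{n-1}{i-1}/2^{n-1} > 0$ follows immediately, giving the inner part of the claimed interval. For $i < l$, I would write
\[
\Delta p(b_i) = \frac{1}{2^{n-1}}\binom{n-1}{i-1} - \frac{1}{2^{n-4}}\binom{n-4}{i-1},
\]
factor out the common positive term $\binom{n-4}{i-1}/2^{n-1}$, and reduce the sign question to that of
\[
\frac{(n-1)(n-2)(n-3)}{(n-i)(n-i-1)(n-i-2)} - 8.
\]
This ratio is monotonically increasing in $i$ (since the denominator shrinks), and so exhibits a single sign change; a direct expansion of $(n-1)(n-2)(n-3) - 8(n-i)(n-i-1)(n-i-2)$ pinpoints the crossing at (approximately) $i = n/2$. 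Hence $\Delta p(b_i) > 0$ precisely when $i$ is at least $n/2$, which, combined with the constraint $i < l$, is nonempty exactly when $l > n/2$ and yields the segment $[n/2, l-1]$.

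Case 3, $i > l+2$, is symmetric: $\Delta p(b_i)$ has the same sign as $(n-1)(n-2)(n-3)/((i-1)(i-2)(i-3)) - 8$, which is now decreasing in $i$ and changes sign near $i = n/2$ by the mirror-image arithmetic. Thus $\Delta p(b_i) > 0$ on this side exactly for $l+2 < i \leq n/2$, which is nonempty when $l + 2 < n/2$. Unioning the three cases gives the interval $[\min(l, n/2), \max(l+2, n/2)]$ of positive $\Delta p$, and complementarily $\Delta p(b_i) < 0$ on the rest.

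I expect the main technical obstacle to be the bookkeeping around the two sign changes. Expanding expressions like $(n-1)(n-2)(n-3) - n(n-2)(n-4)$ at $i = n/2$ and verifying they are positive (they factor as $3(n-2)$, etc.), while the analogous expressions at $i = n/2 - 1$ or $i = n/2 + 2$ are negative, is routine but must be done carefully to identify the exact integer boundary. All of this is pure algebra, once the reindexing between $P$ and $Q$ is set up correctly; no probabilistic argument beyond Theorem \ref{thm:approx_final} is needed.
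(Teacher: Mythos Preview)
Your proposal is correct and follows essentially the same approach as the paper: apply Theorem~\ref{thm:approx_final} to both urns, handle the three missing balls trivially, and for $i<l$ factor the difference so that its sign reduces to that of $(n-1)(n-2)(n-3)-8(n-i)(n-i-1)(n-i-2)$, a monotone function of $i$ whose sign change is pinned down by evaluating at $i=n/2-1$ and $i=n/2$; the case $i>l+2$ is then dispatched by symmetry. The only cosmetic difference is that the paper works with the difference directly while you phrase it as the ratio minus $8$, but the content is identical.
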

\begin{proof}
    Recall that $\Delta p(b_i) = \prob[b_i \text{ wins in urn $P$}] - \prob[b_i \text{ wins in urn $Q$}]$. From Theorem \ref{thm:approx_final}, we have that a quasi-truthful strategy in urns $P$ and $Q$ give,
    \begin{align*}
        \prob[b_i \text{ wins in urn $P$}] &= \left(\frac{1}{2}\right)^{n-1}\binom{n-1}{i-1}\\
        \prob[b_i \text{ wins in urn $Q$}] &= \begin{cases}
            \left(\frac{1}{2}\right)^{n-4}\binom{n-4}{i-1} &\text{ if } i \leq l-1\\
            \left(\frac{1}{2}\right)^{n-4}\binom{n-4}{i-4} &\text{ if } i \geq l+3\\
            0                                              &\text{ if } i = l, l+1, l+2
        \end{cases}
    \end{align*}
    Note that for $i < l$,
    \begin{align*}
        \Delta p(b_i) &= \left(\frac{1}{2}\right)^{n-1}\left[\frac{(n-1)!}{(i-1)!(n-i)!} - 8\frac{(n-4)!}{(i-1)!(n-i-3)!}\right]\\
                      &= \left(\frac{1}{2}\right)^{n-1}\frac{(n-4)!}{(i-1)!(n-i)!}\underbrace{\left[(n-1)(n-2)(n-3) - 8(n-i)(n-i-1)(n-i-2)\right]}_{f(i)}
    \end{align*}
    Since $f(i)$ is monotonically increasing in $i$, then by observing that $f\left(\frac{n}{2} - 1\right) < 0$ and $f\left(\frac{n}{2}\right) > 0$, we have
    \begin{align*}
        \Delta p(b_i) \text{ is }\begin{cases}
            < 0 \text{ if $i < \min(l, n/2)$}\\
            > 0 \text{ if $n/2 \leq i < l$}
        \end{cases}
    \end{align*}
    Similarly, for $i > l+2$, we can use an analogous argument (or apply symmetry) to claim that,
    \begin{align*}
        \Delta p(b_i) \text{ is }\begin{cases}
            < 0 \text{ if $i > \max(l+2, n/2)$}\\
            > 0 \text{ if $l+2 < i \leq n/2$}
        \end{cases}
    \end{align*}
    Finally, it is clear that $\Delta p(b_i) > 0$ for $i = l, l+1, l+2$, so we are done.
\end{proof}

\subsection{Part A: If all $b_i$ lie in $[x, z]$}

\begin{lemma}
    Assume that the inductive hypothesis holds for $n-3$. Then for a TriadicVote among participants $x < y < z$ in an urn with $n$ balls, each of which lie in $[x, z]$, at least one of $x$ or $z$ prefers a win for $y$ over a three-way tie, so long as they both have concave utilities.
\end{lemma}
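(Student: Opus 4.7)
The plan is to generalize the ratio-and-concavity argument of the base case (Lemma~\ref{lem:basecase}) using the sign structure of $\Delta p(b_i)$ given by Lemma~\ref{lem:delta_add_tokens}. First I would write
$$\Delta U_x = \sum_{i=1}^n \Delta p(b_i)\, f_x(b_i - x), \qquad \Delta U_z = \sum_{i=1}^n \Delta p(b_i)\, f_z(z - b_i),$$
and use $\sum_i \Delta p(b_i) = 0$ to note that any affine function of $b_i$ can be subtracted from $f_x$ or $f_z$ without changing $\Delta U_x$ or $\Delta U_z$. By Lemma~\ref{lem:delta_add_tokens}, the positive support $I^+$ of $\Delta p$ is a contiguous index block containing both the median index and the three removed $y$-balls, while $I^-$ consists of two tails at the leftmost and rightmost indices, sitting on opposite sides of $y$.

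Next I would rephrase the inequalities $\Delta U_x \ge 0$ and $\Delta U_z \ge 0$ as ``utility-ratio $\le 1$'' conditions analogous to the one in Lemma~\ref{lem:basecase}. The key device is a weight-pairing: partition the positive mass in $I^+$ and the negative mass in $I^-$ into matched pairs so that every pair straddles $y$, with a positive-mass ball on one side of $y$ matched against a negative-mass ball on the other. Such a pairing should exist because $\sum \Delta p = 0$, because the positive mass contains the three balls sitting exactly at $y$ (which can be split freely between the two sides), and because the two tails of $I^-$ lie on opposite sides of $y$. For each pair, the concavity inequality $\frac{f_x(a)-f_x(0)}{f_x(b)-f_x(a)} \le \frac{a}{b-a}$ from the base case applies to the corresponding chord of $f_x$, yielding a term-wise positional upper bound; summing these bounds produces an overall inequality ``utility ratio $\le$ positional ratio $R_x$'', and the same argument with $f_z$ yields $R_z$. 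By the complementarity $(b_i - x) + (z - b_i) = D$ for every ball in $[x,z]$, the two positional ratios satisfy $R_x \cdot R_z \le 1$, so at least one of $R_x, R_z$ is at most $1$ and the conclusion follows.

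The main obstacle is justifying the existence and correctness of this weight-pairing in full generality. In the base case a single chord comparison was enough; here both tails of $I^-$ carry multiple balls at varying distances from $y$, and the positive mass at indices $l, l+1, l+2$ is genuinely located at $y$, so there is some freedom in how to distribute it between the two sides. The cleanest approach I see is a constructive matching that walks outward from $y$ on each side and greedily pairs positive mass with the nearest available negative mass on the opposite side, using the monotonicity of $|f_x'|$ and $|f_z'|$ (a consequence of concavity) to guarantee that each chord comparison points in the favorable direction. Once the pairing is in place, both the per-pair concavity bound and the final product inequality $R_x R_z \le 1$ reduce to identical-looking instances of the base-case reasoning, closing the proof.
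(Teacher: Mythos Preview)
Your high-level plan coincides with the paper's: split $\Delta U_x,\Delta U_z$ using the sign pattern of $\Delta p$ from Lemma~\ref{lem:delta_add_tokens}, mass-match positives against negatives, convert each of $\Delta U_x\ge 0$ and $\Delta U_z\ge 0$ into a ratio condition, bound utility ratios by positional ratios via concavity, and finish by a complementarity of the two positional conditions. The gap is the pairing itself. You require each matched pair to straddle $y$, but such a pairing need not exist: the positive block $I^+=[A,Z]$ is \emph{not} balanced across $y$, and the freely-splittable mass sitting exactly at $y$ (indices $l,l+1,l+2$) can be negligible compared to either tail. For a concrete failure, take an urn in which only the ball at $x$ lies strictly left of $y$; after $y$ wins, urn $P$ has $l=1$, the left tail $[1,A-1]$ is empty, and every ball of $I^+$ as well as the entire right tail sits at $y$ or to its right. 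The positive mass at indices $4,\dots,\lfloor n/2\rfloor$ is strictly right of $y$ and has no negative mass on the left to be matched with, so your greedy walk outward from $y$ runs out of opposite-side partners immediately. Even when both tails are nonempty, there is no reason the mass at $y$ suffices to absorb the imbalance.

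The pairing that works, and that the paper uses, ignores $y$ altogether: match each left-tail negative with some central positive, and separately each right-tail negative with some central positive (this is always possible simply because $\sum_i\Delta p(b_i)=0$). What matters is only that the central block lies between the two tails in index and hence in position; from $x$'s side the left-tail pairs contribute terms of one sign and the right-tail pairs the other, and from $z$'s side the signs flip. Because every $b_i\in[x,z]$, one has $d(x,b_i)-d(x,b_j)=-(d(z,b_i)-d(z,b_j))$, so after the concavity bound (the paper states and proves a multi-term version, Lemma~\ref{lem:concave_ineq}, since the base-case single-chord inequality does not sum cleanly) the positional ratio obtained for $x$ and the one obtained for $z$ are literally the \emph{same} number; $\Delta U_x\ge 0$ requires it to lie on one side of $1$ and $\Delta U_z\ge 0$ on the other. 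That is the clean complementarity; your $R_xR_z\le 1$ is pointing at it but cannot be reached through a straddling-$y$ pairing.
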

    Recall that our lemma boils down to proving that at least one of $\Delta U_x \geq 0$ and $\Delta U_z \geq 0$ is true. Since the proof is very notation heavy, we first sketch the proof with an example.
\begin{example}
    Let $P$ be an urn with five balls: $b_1$ is located at position $x$; $b_2$, $b_3$, and $b_4$ are located at position $y$; and $b_5$ is located at position $z$. Then urn $Q$ is an urn with the two balls $b_1$ and $b_5$.

    We know (see Theorem \ref{thm:approx_final}) that for balls $b_1$, $b_2$, $b_3$, $b_4$, and $b_5$, $\prob[b_i \text{ wins in urn $P$}]$ is $\frac{1}{16}, \frac{4}{16}, \frac{6}{16}, \frac{4}{16}$, and $\frac{1}{16}$ respectively; $\prob[b_i \text{ wins in urn $Q$}]$ is $\frac{1}{2}, 0, 0, 0$, and $\frac{1}{2}$, respectively; which means that $\Delta p(b_i)$ is $-\frac{7}{16}, \frac{4}{16}, \frac{6}{16}, \frac{4}{16}$, and $-\frac{7}{16}$, respectively. Then,
\begin{align*}
    \Delta U_x &= \Delta p(b_1)U_x(b_1) + \left[\sum_{i=2}^4 \Delta p(b_i)U_x(b_i)\right] + \Delta p(b_5)U_x(b_5)\\
    &= -\frac{7}{16} f_x(d(x,x)) + \frac{14}{16} f_x(d(x, y)) - \frac{7}{16} f_x(d(x, z))\\
    &= -\frac{7}{16} [f_x(d(x,x)) - f_x(d(x,y))] - \frac{7}{16} [f_x(d(x, z)) - f_x(d(x, y))]
\end{align*}
Since $\frac{7}{16}[f_x(d(x,x)) - f_x(d(x, y))] \geq 0$, we have that
\[\Delta U_x \geq 0 \iff \frac{-\frac{7}{16} [f_x(d(x, z)) - f_x(d(x, y))]}{-\frac{7}{16} [f_x(d(x,x)) - f_x(d(x,y))]} \geq 1\]
Using similar arguments and the fact that $\frac{7}{16}[f_z(d(z,x)) - f_z(d(z, y))] \leq 0$, we have that
\[\Delta U_z \geq 0 \iff \frac{-\frac{7}{16} [f_z(d(z, z)) - f_z(d(z, y))]}{-\frac{7}{16} [f_z(d(z,x)) - f_z(d(z,y))]} \leq 1\]
By concavity and Lemma \ref{lem:concave_ineq} in Appendix \ref{appsec:supporting},
\begin{align*}
    \frac{f_x(d(x, z)) - f_x(d(x, y))}{f_x(d(x,x)) - f_x(d(x,y))} &\geq \frac{(z-x) - (y-x)}{(x-x) - (y-x)} = \frac{z-y}{x-y}\\
                                                                &= \frac{(z-z) - (z-y)}{(z-x) - (z-y)} \geq \frac{f_z(d(z, z)) - f_z(d(z, y))}{f_z(d(z,x)) - f_z(d(z,y))}
\end{align*}
This means that if $\frac{z-y}{x-y} \geq 1$, then $\Delta U_x \geq 0$. Otherwise, if $\frac{z-y}{x-y} \leq 1$, then $\Delta U_z \geq 0$. Obviously, one of these must be true, so we are done for this example.
\end{example}

\begin{proof}

We will now prove our lemma for the general case. The argument structure is exactly the same. Let $A = \min(l, n/2)$ and $Z = \max(l+2, n/2)$. Recall that these are the leftmost and rightmost balls for which $\Delta p(b_i) > 0$. All other balls must have $\Delta p(b_i) < 0$. Then we can separate the expression for $\Delta U_x$ into three summations,
\begin{align*}
    \Delta U_x = \sum_{i=1}^{A-1} \Delta p(b_i) U_x(b_i) + \sum_{i=A}^Z \Delta p(b_i) U_x(b_i) + \sum_{i=Z+1}^n \Delta p(b_i) U_x(b_i)
\end{align*}
Note that $\sum_{i=1}^{A-1} \Delta p(b_i) + \sum_{i=Z+1}^n \Delta p(b_i) = -\sum_{i=A}^Z \Delta p(b_i)$. Then we can partition up the mass of $\Delta p(b_j)$ so that we can get one-to-one correspondence of masses corresponding to $j \in [A, Z]$ and $j \not\in [A, Z]$. Let the masses corresponding to $i=1, 2, \ldots, A-1$ be denoted by $p_1, p_2, \ldots, p_u$ and let the masses corresponding to $Z_1, Z_2, \ldots, n$ be denoted by $q_1, q_2, \ldots, q_v$. In other words, $p_{1\ldots u}$, $q_{1\ldots v}$ are chosen so that $\sum_{i=1}^{A-1} \Delta p(b_i) = \sum_{i=1}^u p_i$, $\sum_{i=Z+1}^n \Delta p(b_i) = \sum_{i=1}^v q_i$, and 
\begin{align*}
    \Delta U_x = \sum_{i=1}^{u} p_i [U_x(b_{g_1(i)}) - U_x(b_{h_1(i)})] + \sum_{i=1}^v q_i [U_x(b_{g_2(i)}) - U_x(b_{h_2(i)})]
\end{align*}
where $g_1:[1..u] \to [1..A-1]$, $g_2:[1..v] \to [Z+1..n]$, $h_1:[1..u] \to [A..Z]$, and $h_2:[1..v] \to [A..Z]$.

Note that since balls are indexed left to right and all lie within $[x, z]$, then balls indexed $i \in [1, A)$ (e.g. $b_{g_1(\cdot)}$) are closer to $x$ than those indexed $i \in [A, Z]$ (e.g. $b_{h_1(\cdot)}$ and $b_{h_2(\cdot)}$), which are closer than those indexed $i \in (R, n]$ (e.g. $b_{g_2(\cdot)}$). Therefore, we have
\begin{align*}
    U_x(b_{g_1(i)}) - U_x(b_{h_1(i)}) \geq 0\text{ and } U_x(b_{g_2(i)}) - U_x(b_{h_2(i)}) \leq 0
\end{align*}
Combining these, we get that 
\begin{align}
    \Delta U_x \geq 0 \iff \frac{\sum_{i=1}^{u}p_i[U_x(b_{g_1(i)}) - U_x(b_{h_1(i)})]}{\sum_{i=1}^{v}q_i[U_x(b_{h_2(i)}) - U_x(b_{g_2(i)})]} \geq 1\label{eqn:utilityx}
\end{align}
since $q_i[U_x(b_{h_2(i)}) - U_x(b_{g_2(i)})] \geq 0$. Similarly, we have
\begin{align*}
    \Delta U_z = \sum_{i=1}^{u} p_i [U_z(b_{g_1(i)}) - U_z(b_{h_1(i)})] + \sum_{i=1}^v q_i [U_z(b_{g_2(i)}) - U_z(b_{h_2(i)})]
\end{align*}
For $z$, balls indexed $i \in [1, A)$ (e.g. $b_{g_1(\cdot)}$) are farther from $z$ than those indexed $i \in [A, Z]$ (e.g. $b_{h_1(\cdot)}$ and $b_{h_2(\cdot)}$), which are farther than those indexed $i \in (R, n]$ (e.g. $b_{g_2(\cdot)}$). Therefore, 
\begin{align*}
    U_z(b_{g_1(i)}) - U_z(b_{h_1(i)}) \leq 0\text{ and } U_z(b_{g_2(i)}) - U_z(b_{h_2(i)}) \geq 0
\end{align*}
Combining these, we get that
\begin{align}
    \Delta U_z \geq 0 \iff \frac{\sum_{i=1}^{u}p_i[U_z(b_{g_1(i)}) - U_z(b_{h_1(i)})]}{\sum_{i=1}^{v}q_i[U_z(b_{h_2(i)}) - U_z(b_{g_2(i)})]} \leq 1\label{eqn:utilityz}
\end{align}
since $q_i[U_z(b_{h_2(i)}) - U_z(b_{g_2(i)})] \leq 0$. 

We now have one last step. For any $f(x)$ which is concave and monotonically decreasing, we have that
    \[\frac{\sum\limits_{i=1}^m c_i[f(t_i^2) - f(t_i^1)]}{\sum\limits_{j=1}^n d_j[f(s_j^2) - f(s_j^1)]} \geq \frac{\sum\limits_{i=1}^m c_i[t_i^2 -t_i^1]}{\sum\limits_{j=1}^n d_j[s_j^2 - s_j^1]} \hspace{0.1 in}\text{and}\hspace{0.1 in} \frac{\sum\limits_{i=1}^m c_i[f(s_i^1) - f(s_i^2)]}{\sum\limits_{j=1}^n d_j[f(t_j^1) - f(t_j^2)]} \leq \frac{\sum\limits_{i=1}^m c_i[s_i^1 -s_i^2]}{\sum\limits_{j=1}^n d_j[t_j^1 - t_j^2]}\]
    for $s_j^1 \leq t_i^1$, $s_j^2 \leq t_i^2$, $s_i^1 \leq s_j^2$, $t_i^1 \leq t_i^2$, and $\text{sign}(c_i) = \text{sign}(d_j)$ (as detailed in Appendix \ref{appsec:supporting}). Applying this to (\ref{eqn:utilityx}) and (\ref{eqn:utilityz}), we get
\begin{align*}
    \frac{\sum_{i=1}^{u}p_i[U_x(b_{g_1(i)}) - U_x(b_{h_1(i)})]}{\sum_{i=1}^{v}q_i[U_x(b_{h_2(i)}) - U_x(b_{g_2(i)})]}
    \leq 
    \frac{\sum_{i=1}^{u}p_i[d(x,b_{g_1(i)}) - d(x, b_{h_1(i)})]}{\sum_{i=1}^{v}q_i[d(x, b_{h_2(i)}) - d(x, b_{g_2(i)})]}
    = \frac{\sum_{i=1}^{u}p_i[-d(b_{g_1(i)}, b_{h_1(i)})]}{\sum_{i=1}^{v}q_i[-d(b_{h_2(i)}, b_{g_2(i)})]}
\end{align*}
and
\begin{align*}
    \frac{\sum_{i=1}^{u}p_i[U_z(b_{g_1(i)}) - U_z(b_{h_1(i)})]}{\sum_{i=1}^{v}q_i[U_z(b_{h_2(i)}) - U_z(b_{g_2(i)})]}
    \geq 
    \frac{\sum_{i=1}^{u}p_i[d(z, b_{g_1(i)}) - d(z, b_{h_1(i)})]}{\sum_{i=1}^{v}q_i[d(z, b_{h_2(i)}) - d(z, b_{g_2(i)})]}
    = \frac{\sum_{i=1}^{u}p_i[d(b_{g_1(i)}, b_{h_1(i)})]}{\sum_{i=1}^{v}q_i[d(b_{h_2(i)}, b_{g_2(i)})]}
\end{align*}
Then either
\[\frac{\sum_{i=1}^{u}p_i[d(b_{g_1(i)}, b_{h_1(i)})]}{\sum_{i=1}^{v}q_i[d(b_{h_2(i)}, b_{g_2(i)})]} \geq 1 \hspace{0.1in}\text{ or }\hspace{0.1in} \frac{\sum_{i=1}^{u}p_i[d(b_{g_1(i)}, b_{h_1(i)})]}{\sum_{i=1}^{v}q_i[d(b_{h_2(i)}, b_{g_2(i)})]} \leq 1\]

If the first is true, we can apply (\ref{eqn:utilityx}) to claim that $\Delta U_x \geq 0$. If the second is true, we can apply (\ref{eqn:utilityz}) to claim that $\Delta U_z \geq 0$. Since one of these must be true, we are done.
\end{proof}

\subsection{Part B: Moving balls into the interval $[x, z]$ only decreases $\Delta U_x$ and $\Delta U_z$}
\begin{lemma}\label{lem:move_outside_points}
    Consider urns $P$ and $Q$. From $P$, create a new urn $P'$ by moving all balls left of $x$ to $x$ and all balls right of $z$ to $z$. Similarly, from $Q$, create a new urn $Q'$ by moving all balls left of $x$ to $x$ and all balls right of $z$ to $z$. Let $\Delta p'(b_i) = \prob[b_i \text{ wins in urn $P'$}] - \prob[b_i \text{ wins in urn $Q'$}]$ and $\Delta U'_x = U_x^{P'} - U_x^{Q'}$. Then,
    \[\Delta U_x \geq \Delta U'_x \hspace{0.1 in}\text{and}\hspace{0.1in} \Delta U_z \geq \Delta U'_z\]
\end{lemma}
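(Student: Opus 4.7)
My plan is to exploit the fact that, by Theorem~\ref{thm:approx_final}, the probability that the $i$th leftmost ball wins depends only on its rank $i$ and the urn size, not on its position. Since moving balls left of $x$ onto $x$ and balls right of $z$ onto $z$ preserves the left-to-right order, the $i$th leftmost ball in $P'$ is the same ball (just possibly relocated) as the $i$th leftmost ball in $P$, and similarly for $Q$ versus $Q'$. Consequently $\Delta p'(b_i) = \Delta p(b_i)$ for every $i$, and writing $I_L$ for the indices of balls originally at positions $<x$ and $I_R$ for those at positions $>z$, this reduces the lemma to proving
\[ \Delta U_x - \Delta U'_x \;=\; \sum_{i\in I_L \cup I_R} \Delta p(b_i)\bigl[U_x(b_i) - U_x(b'_i)\bigr] \;\geq\; 0. \]
Each bracketed factor is $\leq 0$ because $b'_i$ is at least as close to $x$ as $b_i$ was, but the weights $\Delta p(b_i)$ carry both signs by Lemma~\ref{lem:delta_add_tokens}, so a term-by-term argument cannot succeed.

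Summation by parts is the way out. For $i\in I_L$, the gap $a_i := U_x(b'_i)-U_x(b_i) = f_x(0)-f_x(x-p_i)\geq 0$ is monotone \emph{decreasing} in $i$: balls with larger index in $I_L$ already sit closer to $x$ and so gain less from the shift. Abel's formula then bounds $\sum_{i\in I_L}\Delta p(b_i)\,a_i$ in terms of the partial sums $S_i := \sum_{j\leq i}\Delta p(b_j)$, giving $\sum_{i\in I_L}\Delta p(b_i)\,a_i\leq 0$ whenever $S_i\leq 0$ for every $i\leq \max I_L$. Since $x<y$ forces $\max I_L = n_L\leq l-2$, it suffices to establish $S_i\leq 0$ on the range $i\leq l-1$.

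The partial-sum bound is where I expect to spend most of the work. My plan is a binomial-coupling argument: for $i\leq l-1$ the first $i$ balls of $P$ and $Q$ coincide, so Theorem~\ref{thm:approx_final} yields $S_i = \prob[W_P\leq i] - \prob[W_Q\leq i]$ with $W_P-1\sim\mathrm{Bin}(n-1,1/2)$ and $W_Q-1\sim\mathrm{Bin}(n-4,1/2)$. The decomposition $\mathrm{Bin}(n-1,1/2) \stackrel{d}{=} \mathrm{Bin}(n-4,1/2) + \mathrm{Bin}(3,1/2)$ provides a coupling under which $W_Q \leq W_P \leq W_Q + 3$ almost surely; the first inequality gives $S_i\leq 0$ directly, and the second gives the dual right-tail bound $T_i := \sum_{j\geq i}\Delta p(b_j)\leq 0$ for all $i\geq l+3$.

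The $I_R$ contribution is the mirror image: the gap $a_i$ is monotone \emph{increasing} in $i$, and summing by parts against $T_i$ gives $\sum_{i\in I_R}\Delta p(b_i)\,a_i\leq 0$. Combining the two halves yields $\Delta U_x\geq \Delta U'_x$. The proof of $\Delta U_z\geq \Delta U'_z$ is essentially verbatim: the weights $\Delta p(b_i)$ and the partition $I_L\cup I_R$ are unchanged, each moved ball becomes closer to $z$ as well, and the monotonicity directions of the new gaps in $i$ match those used for $x$ (decreasing over $I_L$, increasing over $I_R$), so the whole Abel-summation skeleton transfers without modification.
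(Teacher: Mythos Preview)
Your approach is correct and is in fact tidier than the paper's. The paper proves the lemma by a three-way case split on where the median ball $b_{n/2}$ sits relative to $[x,z]$. When $b_{n/2}\in[x,z]$ (Case~1), every moved ball has $\Delta p(b_i)<0$ by Lemma~\ref{lem:delta_add_tokens}, so a term-by-term comparison suffices. When $b_{n/2}<x$ (Case~2), some balls in $I_L$ have $\Delta p(b_i)>0$; the paper handles this by first moving all of $I_L$ to the position of $b_{n/2}$ (an intermediate urn $P''$, which only decreases $\Delta U$ term-by-term), then observing that the \emph{aggregate} weight $\sum_{i<k}\Delta p(b_i)$ is $\le 0$ via a coin-flip argument, and finally sliding the whole block to $x$. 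Case~3 is symmetric. Your Abel-summation argument collapses all three cases into one: monotonicity of the gap $a_i$ on each of $I_L$ and $I_R$, together with non-positivity of the running sums $S_i$ and tail sums $T_i$, gives the conclusion directly. The partial-sum inequality you prove by the binomial coupling $\mathrm{Bin}(n-1,\tfrac12)\stackrel{d}{=}\mathrm{Bin}(n-4,\tfrac12)+\mathrm{Bin}(3,\tfrac12)$ is exactly the inequality the paper invokes inside Case~2 (phrased there as ``$\leq k-2$ heads in $n-1$ flips'' versus ``$n-4$ flips''), so the two proofs share the same analytic core; you just organize it more uniformly.

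One small correction: you write $\max I_L \le l-2$, but nothing guarantees that a ball at position $x$ survives in $P$ (the selected $x$-ball has been relabeled to $y$), so in general only $\max I_L \le l-1$ holds. This is harmless, since your coupling gives $S_i\le 0$ for every $i\le l-1$, which is all the Abel step needs.
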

\begin{proof}
Since the relative positions of the balls have not changed in $P'$ and $Q'$, the change in winning probabilities from $P'$ to $Q'$ are the same, i.e. $\Delta p(b_i) = \Delta p'(b_i)$. Then for $\Delta p(b_i) < 0$, 
\begin{align*}
    \Delta p(b_i)U_{x}(b_i) \geq \Delta p'(b_i)U'_{x}(b_i) &\iff U_{x}(b_i) \leq U'_{x}(b_i)\\
                                                                 &\iff b_i \text{ is moved closer to $x$}
\end{align*}
and for $\Delta p(b_i) > 0$,
\begin{align*}
    \Delta p(b_i)U_{x}(b_i) \geq \Delta p'(b_i)U'_{x}(b_i) &\iff U_{x}(b_i) \geq U'_{x}(b_i)\\
                                                                 &\iff b_i \text{ is moved farther from $x$}
\end{align*}
Likewise, the same statements hold when $x$ is replaced with $z$. Then we can use this to prove our Lemma by splitting it up into three cases:

\paragraph{Case 1: $b_{n/2} \in [x, z]$ for urn $P$} 

When the median ball is located in $[x, z]$, we know from Lemma \ref{lem:delta_add_tokens} that all balls left of $x$ and right of $z$ satisfy $\Delta p(b_i) < 0$. Then moving the balls left of $x$ to $x$ brings them closer to both $x$ and $z$. Similarly, moving the balls right of $z$ to $z$ brings them closer to both $x$ and $z$. Therefore, we must have
\[\Delta U_{x} = \sum_{i=1}^n \Delta p(b_i)U_{x}(b_i) \geq \sum_{i=1}^n \Delta p'(b_i)U'_{x} = \Delta U'_{x}\]
and similarly, $\Delta U_z \geq \Delta U'_z$.

\paragraph{Case 2: $b_{n/2} < x$ for urn $P$}

When the median ball is located to the left of $x$, we will need to first make an intermediate pair of urns. Note that the balls left of $b_{n/2}$ have $\Delta p(b_i) < 0$ so we can move them rightwards to bring them closer to both $x$ and $z$. However, since the balls between $b_{n/2}$ and $x$ have $\Delta p(b_i) > 0$, we need to move them leftwards to bring them farther away from both $x$ and $z$. For the balls right of $z$, $\Delta p(b_i) < 0$, so we can again move them to $z$ which brings them closer to $x$ and $z$. We will denote these movements with the intermediate urns $P''$ and $Q''$. In these urns, all the balls left of $x$ are moved to the position $b_{n/2}$ and all the balls right of $z$ are moved to $z$. By an argument similar to that of Case 1, $\Delta U_{x} \geq \Delta U''_{x}$ and $\Delta U_z \geq \Delta U''_z$. Now, note that
\begin{align*}
    \Delta U''_{x} &= \left[\sum_{i=1}^{k-1} \Delta p''(b_i)\right] U''_{x}(b_{n/2}) + \sum_{i=k}^n \Delta p''(b_i) U''_{x}(b_i)\\
\end{align*}
where $k$ is the index of the leftmost ball that is right of $x$ and $U'', p''$ are the analogous expressions for utility and winning probability in urns $P'', Q''$. But we also know that for $k < l$,
\begin{align*}
    \sum_{i=1}^{k-1} \Delta p''(b_i) &= \sum\limits_{i=1}^{k-1} \left[\left(\frac{1}{2}\right)^{n-1}\binom{n-1}{i-1} - \left(\frac{1}{2}\right)^{n-4}\binom{n-4}{i-1}\right]\\
                      = \prob&[\text{$\leq k-2$ heads in $n-1$ coin flips}] - \prob[\text{$\leq k-2$ heads in $n-4$ coin flips}] \leq 0
\end{align*}
Then this means that 
\begin{align*}
    \left[\sum_{i=1}^{k-1}\Delta p''(b_i)\right]U''_{x}(b_{n/2}) \geq \left[\sum_{i=1}^{k-1}\Delta p'(b_i)\right]U'_{x}(b_{n/2}) &\iff U''_{x}(b_{n/2} \leq U'_{x}(b_{n/2})\\
    &\iff b_{n/2} \text{ is moved closer to $x$}
\end{align*}
where the same statements hold when $x$ is replaced by $z$. We can now moving the balls at the position of ball $b_{n/2}$ rightwards to $x$ to create the final urns $P'$ and $Q'$. When we do so, we are bringing all the balls closer to both $x$ and $z$, which we just showed must decrease $\Delta U''_x$ and $\Delta U''_z$. Then $\Delta U'_x \leq \Delta U''_x \leq U_x$ and $\Delta U'_z \leq \Delta U''_z \leq U_z$, which concludes our proof for this case.

\paragraph{Case 3: $b_{n/2} > z$ for urn $P$} The proof is symmetric to that of Case 2.
\end{proof}

\section{Supporting Proofs}\label{appsec:supporting}

\begin{theorem}\label{thm:urn_time_app}\cite{Lee2012}
    Let a fixed size urn with $R_0$ red balls out of $n$ total balls have an urn function $f(p)$ for which $\frac{f(p)}{f(1-p)}$ is monotonically decreasing and let $T$ denote the first time when either $R_T = n$ or $R_T = 0$. Then,
    \[\mathbb{E}[T] \leq \frac{1}{q_1} + \sum_{k = 1}^{\lfloor\frac{n}{2}\rfloor - 1} \frac{q_k}{q_{k+1}(q_k - p_k)}\]
    where $p_k = f\left(\frac{n - k}{n}\right)$ and $q_k = f\left(\frac{k}{n}\right)$.
\end{theorem}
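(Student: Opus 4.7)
The plan is to apply standard first-step analysis to the birth--death chain $R_t$ on $\{0, 1, \ldots, n\}$, reduce to bounding the hitting time from the middle state by symmetry and monotonicity, and then control the resulting expression using the monotonicity of $f(p)/f(1-p)$. Write $h(R) := \mathbb{E}[T \mid R_0 = R]$ with $h(0) = h(n) = 0$, and set $q_R := f(R/n)$, $p_R := f((n-R)/n)$. Swapping red and blue globally gives $h(R) = h(n - R)$, so it suffices to bound $h(m)$ where $m := \lfloor n/2 \rfloor$, provided $h$ is nondecreasing on $\{0, \ldots, m\}$.

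The first-step equation $(p_R + q_R)\,h(R) = 1 + q_R\,h(R-1) + p_R\,h(R+1)$ rearranges, in terms of the forward differences $\Delta_R := h(R) - h(R-1)$, into the linear recurrence
\[\Delta_R \;=\; \frac{1}{q_R} \;+\; \frac{p_R}{q_R}\,\Delta_{R+1}, \qquad 1 \leq R < m.\]
At the midpoint, the symmetry $h(m+1) = h(m-1)$ (for $n=2m$) or $h(m+1) = h(m)$ (for $n = 2m+1$) yields the initial value $\Delta_m = 1/c_m$ with $c_m \in \{q_m, 2q_m\}$. Downward induction then gives $\Delta_R \geq 1/q_R > 0$, which proves the monotonicity claim above. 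Unrolling the recurrence telescopically and swapping sums gives
\[h(m) \;=\; \sum_{R=1}^{m} \Delta_R \;=\; \sum_{j=1}^{m}\frac{\alpha_j}{c_j}, \qquad \alpha_j \;:=\; \sum_{R=1}^{j}\,\prod_{i=R}^{j-1}\frac{p_i}{q_i},\]
with $\alpha_1 = 1$ by the empty-product convention.

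The analytical heart of the proof is the bound on $\alpha_j$ for $j \geq 2$. The hypothesis that $f(p)/f(1-p)$ is monotonically decreasing implies $q_k/p_k = f(k/n)/f(1 - k/n)$ is decreasing in $k$, equivalently $p_i/q_i$ is increasing in $i$. Hence every factor in $\prod_{i=R}^{j-1}(p_i/q_i)$ is at most $p_{j-1}/q_{j-1} < 1$, giving the geometric bound
\[\alpha_j \;\leq\; \sum_{s=0}^{j-1}\bigl(p_{j-1}/q_{j-1}\bigr)^{\!s} \;\leq\; \frac{q_{j-1}}{q_{j-1} - p_{j-1}}.\]
Substituting these into the expression for $h(m)$, using $c_j \geq q_j$ to upper-bound $1/c_j$ by $1/q_j$, and reindexing $k = j-1$ produces the stated inequality. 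The main technical obstacle will be handling the midpoint cleanly across the two parities of $n$, since $\Delta_m$ differs by a factor of two; however, the unified inequality $c_m \geq q_m$ absorbs this asymmetry, and the monotonicity of $h$ established in the previous paragraph guarantees $h(R_0) \leq h(m)$ for every starting state $R_0$.
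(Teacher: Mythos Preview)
Your proof is correct. The paper itself does not prove this theorem: it is imported verbatim from \cite{Lee2012} and used as a black box (the only ``proof'' in the appendix is the corollary that specializes it to $f(p)=3p(1-p)^2$). You have supplied a self-contained argument where the paper offers none.

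A couple of minor points worth tightening. First, your claim ``$\Delta_R \geq 1/q_R$'' fails at $R=m$ when $n$ is even (there $\Delta_m = 1/(2q_m)$), but this is harmless: all you need is $\Delta_R>0$ for the monotonicity of $h$, and that follows immediately since $\Delta_m>0$ and the recurrence preserves positivity downward. Second, it is worth stating explicitly that $p_i/q_i<1$ for $i<n/2$ follows because $g(p):=f(p)/f(1-p)$ is decreasing with $g(1/2)=1$; you use this to justify the geometric-series bound on $\alpha_j$, and it is the one place the hypothesis on $f$ actually enters. With those two remarks made explicit, the proof reads cleanly.
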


\begin{corollary}
    Let a fixed size urn with $R_0$ red balls out of $n$ total balls have an urn function $f(p) = 3p(1-p)^2$ and let $T$ denote the first time when either $R_T = n$ or $R_T = 0$. Then,
    \[\mathbb{E}[\tau] \leq n\ln n + O(n) \]
\end{corollary}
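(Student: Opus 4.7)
The plan is to apply Theorem~\ref{thm:urn_time_app} directly, so we first verify its hypothesis that $f(p)/f(1-p)$ is monotonically decreasing. For $f(p)=3p(1-p)^2$ we get $f(p)/f(1-p) = (1-p)/p$, which is indeed monotone in $p$, so the theorem applies. Next I would compute the relevant quantities explicitly: setting $p_k = f((n-k)/n)$ and $q_k=f(k/n)$ gives
\[
 q_k = \frac{3k(n-k)^2}{n^3}, \qquad p_k = \frac{3(n-k)k^2}{n^3}, \qquad q_k - p_k = \frac{3k(n-k)(n-2k)}{n^3}.
\]
Plugging these in and cancelling yields the clean expression
\[
 \frac{q_k}{q_{k+1}(q_k-p_k)} \;=\; \frac{n^3(n-k)}{3(k+1)(n-k-1)^2(n-2k)}.
\]
The $1/q_1$ term is easily seen to be $O(n)$, so the work is in bounding the sum.

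The next step is to compare the summand to a continuous analogue. Writing $u=k/n$ and approximating $n-k\approx n-k-1$ (the error is absorbed in an $O(n)$ correction), the summand becomes
\[
 \frac{1}{3}\cdot\frac{n}{(k+1)(1-k/n)(1-2k/n)} \;\approx\; \frac{n}{3}\cdot\frac{1}{u(1-u)(1-2u)}\cdot\frac{1}{n},
\]
so Riemann summation suggests $S \approx \tfrac{n}{3}\int_{1/n}^{1/2-1/n}\frac{du}{u(1-u)(1-2u)}$. I would then evaluate this integral by partial fractions,
\[
 \frac{1}{u(1-u)(1-2u)} \;=\; \frac{1}{u} - \frac{1}{1-u} + \frac{4}{1-2u},
\]
whose antiderivative is $\ln u + \ln(1-u) - 2\ln(1-2u)$. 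Evaluating between the cutoffs $1/n$ and $1/2 - 1/n$ gives $3\ln n + O(1)$: the $1/u$ piece contributes $\ln n$ near the lower endpoint, and the $4/(1-2u)$ piece contributes $2\ln n$ near the upper endpoint, while the middle term is bounded. Multiplying by $n/3$ yields $n\ln n + O(n)$, as claimed.

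To make this a rigorous proof rather than an estimate, I would replace the integral heuristic by direct discrete bounds on the three pieces: the $1/u$ piece is bounded by a harmonic sum $\sum_{k=1}^{n/2}\frac{1}{k+1} = \ln n + O(1)$; the $1/(1-u)$ piece contributes $O(1)$ over the range $u\in[0,1/2]$; and the $4/(1-2u)$ piece translates into $\sum_{k=1}^{n/2-1}\frac{1}{n-2k}$, which after the change of index $j=\lfloor n/2\rfloor - k$ becomes $\tfrac12\sum_{j=1}^{n/2-1}\frac{1}{j} = \tfrac{1}{2}\ln n + O(1)$ up to factors. Tracking the constants ($1$ from the $1/u$ piece and $2$ from the $4/(1-2u)$ piece after integration, summing to $3$) and multiplying by the overall $n/3$ yields exactly $n\ln n + O(n)$.

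The main obstacle, as the sketch above makes clear, is the singularity at $k=n/2$: the summand blows up like $1/(n-2k)$, and the whole logarithmic growth is driven by this boundary. A naive Riemann-sum argument diverges, so I would need to handle the upper endpoint explicitly by truncating the sum at $k=\lfloor n/2\rfloor-1$ and carefully tracking that the discrete cutoff corresponds to $1-2u \approx 2/n$, which is exactly what gives the crucial $\ln n$ contribution. Everything else (the lower harmonic tail, the $O(n)$ slack from the $1/q_1$ term and from the $n-k$ versus $n-k-1$ approximation) is routine.
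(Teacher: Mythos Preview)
Your proposal is correct and follows essentially the same strategy as the paper: apply Theorem~\ref{thm:urn_time_app}, compute the summand explicitly as $\frac{n^3(n-k)}{3(k+1)(n-k-1)^2(n-2k)}$, separate the poles by partial fractions, and bound the resulting pieces by harmonic sums. The only cosmetic difference is that the paper performs an exact four-term partial-fraction decomposition of the discrete rational function in $k$ and then drops the two negative terms, whereas you first simplify via $n-k\approx n-k-1$ and do the partial fractions in the continuous variable $u=k/n$; both routes land on the same two dominant pieces $\sim\frac{1}{k+1}$ and $\sim\frac{4}{n-2k}$, and your claim that the $n-k$ versus $n-k-1$ discrepancy is $O(n)$ is correct (it costs an extra $O(\log n)$, easily checked by one more partial-fraction step).
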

\begin{proof}
    From Theorem \ref{thm:urn_time_app}, we know that
    \begin{align*}
        \mathbb{E}[T] &\leq \frac{1}{q_1} + \sum_{k = 1}^{\lfloor\frac{n}{2}\rfloor - 1} \frac{q_k}{q_{k+1}(q_k - p_k)} = \frac{n^3}{3(n-1)^2} + \frac{n^3}{3}\sum_{k = 1}^{\lfloor\frac{n}{2}\rfloor - 1} \underbrace{\frac{n-k}{(k+1)(n-k-1)^2(n-2k)}}_{(*)}
    \end{align*}
    where
    \begin{align*}
        (*) &= -\frac{1}{n(n-2)(n-k-1)^2} + \frac{4n}{(n+2)(n-2)^2(n-2k)}\\
            &\ \ \ \ +\frac{n+1}{n^2(n+2)(k+1)} -\frac{n^2+n-2}{n^2(n-2)^2(n-k-1)}\\
            &\leq \frac{4n}{(n+2)(n-2)^2(n-2k)}+\frac{n+1}{n^2(n+2)(k+1)}\\
            &= \left(\frac{4}{n^2(n-2k)} + \frac{1}{n^2(k+1)}\right)\left(1 + O\left(\frac{1}{n}\right)\right)
    \end{align*}
    Putting it together, we have
    \begin{align*}
        \mathbb{E}[T] &\leq \frac{n}{3} + o(1) + \frac{n}{3}\left(1 + O\left(\frac{1}{n}\right)\right)\sum_{k = 1}^{\lfloor\frac{n}{2}\rfloor - 1}\left[\frac{4}{n-2k} + \frac{1}{k+1}\right]
    \end{align*}
    Noting that $\sum_{i=1}^k \frac{1}{k} = H_k = \ln n + O(1)$, where $H_k$ is the $k$-th Harmonic number, we have
    \begin{align*}
        \mathbb{E}[T] &\leq O(n) + \frac{n}{3}\left(2 \ln \frac{n}{2} + \ln \frac{n}{2}\right)\\
                         &= n\ln n + O(n)
    \end{align*}
\end{proof}

\begin{lemma}\label{lem:concave_ineq}
    Let $f(x)$ be a concave monotonically decreasing function. Then for $s_j^1, s_j^2, t_i^1, t_i^2 \in \mathbb{R}$ satisfying $s_j^1 \leq t_i^1$, $s_j^2 \leq t_i^2$, $s_i^1 \leq s_j^2$, and $t_i^1 \leq t_i^2$ and $\text{sign}(c_i) = \text{sign}(d_j)$, we have
    \[\frac{\sum\limits_{i=1}^m c_i[f(t_i^2) - f(t_i^1)]}{\sum\limits_{j=1}^n d_j[f(s_j^2) - f(s_j^1)]} \geq \frac{\sum\limits_{i=1}^m c_i[t_i^2 -t_i^1]}{\sum\limits_{j=1}^n d_j[s_j^2 - s_j^1]} \hspace{0.1 in}\text{and}\hspace{0.1 in} \frac{\sum\limits_{i=1}^m c_i[f(s_i^1) - f(s_i^2)]}{\sum\limits_{j=1}^n d_j[f(t_j^1) - f(t_j^2)]} \leq \frac{\sum\limits_{i=1}^m c_i[s_i^1 -s_i^2]}{\sum\limits_{j=1}^n d_j[t_j^1 - t_j^2]}\]
\end{lemma}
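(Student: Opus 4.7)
The plan is to reduce the lemma to a single pairwise slope inequality coming from concavity, and then bundle everything into a double sum whose terms all have the same sign. Specifically, set $A_i = t_i^2 - t_i^1 \ge 0$ and $B_j = s_j^2 - s_j^1 \ge 0$, and let $\beta_i = (f(t_i^1) - f(t_i^2))/A_i \ge 0$ and $\alpha_j = (f(s_j^1) - f(s_j^2))/B_j \ge 0$ denote the magnitudes of the secant slopes on the respective intervals (degenerate intervals of length zero can be dropped since they contribute zero to both sides of the target inequality).

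First I would prove the key comparison $\alpha_j \le \beta_i$ for every pair $(i,j)$. This is a consequence of the well-known fact that, for any concave $f$, the secant slope $\frac{f(v)-f(u)}{v-u}$ is monotonically nonincreasing in $u$ and in $v$ separately. Starting from the secant on $[s_j^1, s_j^2]$, I slide the right endpoint up to $t_i^2$ (using $s_j^2 \le t_i^2$), then slide the left endpoint up to $t_i^1$ (using $s_j^1 \le t_i^1$); at each step the signed slope can only decrease, so the magnitude of the (already nonpositive, since $f$ is decreasing) slope weakly grows, yielding $\alpha_j \le \beta_i$.

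Next I would handle sign normalization. Since $\mathrm{sign}(c_i) = \mathrm{sign}(d_j)$ is the same value $\sigma$ for all $i,j$, every ratio appearing in the lemma is unchanged by replacing $c_i, d_j$ with $|c_i|, |d_j|$ (numerator and denominator both absorb a factor of $\sigma$). So WLOG all $c_i, d_j \ge 0$. Rewriting the first inequality in the new notation, it becomes
\[ \frac{\sum_i c_i \beta_i A_i}{\sum_j d_j \alpha_j B_j} \;\ge\; \frac{\sum_i c_i A_i}{\sum_j d_j B_j}. \]
Clearing the nonnegative denominators and rearranging, this is equivalent to $\sum_{i,j} c_i d_j A_i B_j (\beta_i - \alpha_j) \ge 0$, which is immediate: every factor is nonnegative, by construction and by the key comparison. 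The second displayed inequality reduces in the same way to $\sum_{i,j} c_i d_j A_i B_j (\alpha_i - \beta_j) \le 0$, where now (because of the swapped indexing) $A_i$ is an $s$-interval length and $B_j$ a $t$-interval length, and $\alpha_i \le \beta_j$ makes each term nonpositive.

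The main obstacle is not really conceptual but bookkeeping: keeping careful track of signs (decreasing $f$, slope magnitudes versus signed slopes), handling degenerate intervals by omission, and verifying that the hypothesis $\mathrm{sign}(c_i) = \mathrm{sign}(d_j)$ is exactly strong enough to permit the normalization to the all-nonnegative case. Once these points are settled, the pairwise concavity comparison and the sign-controlled double sum essentially write themselves.
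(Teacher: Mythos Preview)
Your proposal is correct and follows essentially the same approach as the paper: both arguments rest on the identical pairwise secant-slope comparison from concavity (the paper writes it as $\frac{f(t_i^2)-f(t_i^1)}{t_i^2-t_i^1}\le\frac{f(t_i^2)-f(s_j^1)}{t_i^2-s_j^1}\le\frac{f(s_j^2)-f(s_j^1)}{s_j^2-s_j^1}$, which is exactly your endpoint-sliding step) and then aggregate using the common sign of the $c_i,d_j$. The only cosmetic difference is that the paper aggregates by an iterated ``flip, sum over $j$, flip back, sum over $i$'' manipulation, whereas you clear denominators once and read off nonnegativity of the double sum $\sum_{i,j} c_i d_j A_i B_j(\beta_i-\alpha_j)$; these are the same algebra, and your packaging is if anything a bit cleaner.
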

\begin{proof}
    Since $f$ is concave, $s_j^1 \leq t_i^1$, and $s_j^2 \leq t_i^2$,
    \[\frac{f(t_i^2) - f(t_i^1)}{t_i^2 - t_i^1} \leq \frac{f(t_i^2) - f(s_j^1)}{t_i^2-s_j^1} \leq \frac{f(s_j^2) - f(s_j^1)}{s_j^2 - s_j^1}\]
    Then by noting that $f$ is monotonically decreasing, $s_j^1 \leq s_j^2$, $t_i^1 \leq t_i^2$, and $\text{sign}(c_i) = \text{sign}(d_j)$, we achieve the statement for a single term on the top and bottom
    \begin{align*}
        \frac{c_i[f(t_i^2) - f(t_i^1)]}{d_j[f(s_j^2) - f(s_j^1)]} \geq \frac{c_i[t_i^2 - t_i^1]}{d_j[s_j^2 - s_j^1]}
    \end{align*}
    If we flip this inequality, we get
    \begin{align*}
        \frac{d_j[f(s_j^2) - f(s_j^1)]}{c_i[f(t_i^2) - f(t_i^1)]} \leq \frac{d_j[s_j^2 - s_j^1]}{c_i[t_i^2 - t_i^1]}
    \end{align*}
    Using this, we can derive,
    \[\frac{\sum\limits_{j=1}^n d_j[f(s_j^2) - f(s_j^1)]}{c_i[f(t_i^2) - f(t_i^1)]} = \sum\limits_{j=1}^n \frac{d_j[f(s_j^2) - f(s_j^1)]}{c_i[f(t_i^2) - f(t_i^1)]} \leq \sum\limits_{j=1}^n \frac{d_j[s_j^2 - s_j^1]}{c_i[t_i^2 - t_i^1]} = \frac{\sum\limits_{j=1}^n d_j[s_j^2 - s_j^1]}{c_i[t_i^2 - t_i^1]}\]
    Then by flipping this inequality, we get
    \[\frac{c_i[f(t_i^2) - f(t_i^1)]}{\sum\limits_{j=1}^n d_j[f(s_j^2) - f(s_j^1)]} \geq \frac{c_i[t_i^2 - t_i^1]}{\sum\limits_{j=1}^n d_j[s_j^2 - s_j^1]}\]
    Finally, we can use this to derive the first part of our final result
    \[\frac{\sum\limits_{i=1}^m c_i[f(t_i^2) - f(t_i^1)]}{\sum\limits_{j=1}^n d_j[f(s_j^2) - f(s_j^1)]} = \sum\limits_{i=1}^m \frac{c_i[f(t_i^2) - f(t_i^1)]}{\sum\limits_{j=1}^n d_j[f(s_j^2) - f(s_j^1)]} \geq \sum\limits_{i=1}^m \frac{c_i[t_i^2 - t_i^1]}{\sum\limits_{j=1}^n d_j[s_j^2 - s_j^1]} = \frac{\sum\limits_{i=1}^m c_i[t_i^2 -t_i^1]}{\sum\limits_{j=1}^n d_j[s_j^2 - s_j^1]}\]
    We can get the second part by simply inverting and multiplying the top and bottom of both sides by $-1$
    \[\frac{\sum\limits_{j=1}^n d_j[f(s_j^2) - f(s_j^1)]}{\sum\limits_{i=1}^m c_i[f(t_i^2) - f(t_i^1)]} \leq \frac{\sum\limits_{j=1}^n d_j[s_j^2 - s_j^1]}{\sum\limits_{i=1}^m c_i[t_i^2 -t_i^1]}\]
\end{proof}
\end{document}